\documentclass[envcountsect,envcountsame]{llncs}

\usepackage[utf8]{inputenc}
\usepackage[T1]{fontenc}
\usepackage{graphicx}
\usepackage{amssymb,amsmath}
\usepackage{array}
\usepackage{mathtools}

\usepackage{bbding}
\usepackage{xspace}

\usepackage{enumitem}
\newenvironment{titemize}
	{\begin{itemize}[topsep=1ex]}
	{\end{itemize}}

\pagestyle{headings}

\spnewtheorem{properties}[theorem]{Properties}{\bfseries}{\rmfamily}
\spnewtheorem{myquestion}[theorem]{Question}{\bfseries}{\itshape}
\spnewtheorem{reduction}[theorem]{Reduction}{\bfseries}{\rmfamily}
\spnewtheorem{myclaim}[theorem]{Claim}{\bfseries}{\rmfamily}
\newcommand\myparagraph[1]{\smallskip\noindent{\bf #1}}

\setcounter{tocdepth}{2}

\usepackage[pdftex,unicode,hidelinks]{hyperref}
\hypersetup{breaklinks=true}
\hypersetup{pdftitle={General Caching Is Hard: Even with Small Pages}}
\hypersetup{pdfauthor={Lukáš Folwarczný and Jiří Sgall}}
\hypersetup{pdfkeywords={general caching, small pages, NP-hardness}}

\def\NP{\textsc{NP}}
\def\APX{\textsc{APX}}
\def\size{\textsc{size}}
\def\cost{\textsc{cost}}
\def\IS{\textsc{IndependentSet}\xspace}
\def\threecaching#1{\textsc{3Caching({#1})}}

\renewcommand\epsilon{\varepsilon}

\def\abs#1{{\left|#1\right|}}
\def\BB{\mathcal{B}}
\def\BBb{\overline{\mathcal{B}}}
\def\PP{\mathcal{P}}
\def\NN{\mathcal{N}}
\def\II{\mathcal{I}}
\def\TT{\mathcal{T}}
\def\IIt{\widetilde{\mathcal{I}}}
\def\aa{\bar{a}}
\def\bb{\bar{b}}

\newcommand{\decproblem}[3]{\begin{tabular}{lp{9.5cm}}
\textit{Problem}: & {\sc #1}\\
\textit{Instance}: & #2\\
\textit{Question:\hspace{0.3cm}} & #3
\end{tabular}}

\setlength{\abovecaptionskip}{.5ex}
\setlength{\belowcaptionskip}{-2ex}
\setlength{\floatsep}{0ex}
\setlength{\textfloatsep}{4ex}
\setlength{\intextsep}{2ex}

\begin{document}

\title{General Caching Is Hard: Even with Small Pages%
\thanks{Partially supported by the Center of Excellence -- Institute for
Theoretical Computer Science, Prague (project P202/12/G061 of
GA~\v{C}R).}
}

\author{Luk\'a\v{s} Folwarczn\'y \and Ji\v{r}\'i Sgall}

\institute{
Computer Science Institute of Charles University,\\
Faculty of Mathematics and Physics,\\
Malostransk\'e n\'am.\ 25, CZ-11800 Praha 1,\\
Prague, Czech Republic.\\
\email{\{folwar,sgall\}@iuuk.mff.cuni.cz}
}

\maketitle

\begin{abstract}
\emph{Caching} (also known as \emph{paging}) is a~classical problem concerning
page replacement policies in two-level memory systems. \emph{General caching} is
the variant with pages of different sizes and fault costs. We give the first
\NP-hardness result for general caching with small pages: General caching is
(strongly) \NP-hard even when page sizes are limited to $\{1, 2, 3\}$. It holds
already in the \emph{fault model} (each page has unit fault cost) as well as in
the \emph{bit model} (each page has the same fault cost as size). We also give
a~very short proof of the strong \NP-hardness of general caching with page sizes
restricted to $\{1, 2, 3\}$ and arbitrary costs.

\keywords{general caching, small pages, NP-hardness}
\end{abstract}

\section{Introduction}

\emph{Caching} (also known as \emph{uniform caching} or \emph{paging}) is
a~classical problem in the area of online algorithms and has been extensively
studied since 1960s. It models a~two-level memory system: There is the fast
memory of size~$C$ (the \emph{cache}) and a~slow but large main memory where all
data reside. The problem instance comprises a~sequence of requests, each
demanding a~page from the main memory. No cost is incurred if the requested page
is present in the cache (a \emph{cache hit}). If the requested page is not
present in the cache (a \emph{cache fault}), the page must be loaded at the
fault cost of one; some page must be evicted to make space for the new one when
there are already $C$~pages in the cache. The natural objective is to evict
pages in such a~way that the total fault cost is minimized. For a~reference on
classical results, see Borodin and El-Yaniv~\cite{borodin_el_yaniv}.

In 1990s, with the advent of World Wide Web, a~generalized variant called
\emph{file caching} or simply \emph{general caching} was
studied~\cite{irani_multisize,young_online_file_caching}. In this setting, each
page~$p$ has its $\size(p)$ and $\cost(p)$. It costs $\cost(p)$ to load this
page into the cache and the page occupies $\size(p)$ units of memory there.
Uniform caching is the special case satisfying $\size(p) = \cost(p) = 1$ for
every page~$p$. Other important cases of this general model are
\begin{titemize}
\item the \emph{cost model} (\emph{weighted caching}):
$\size(p) = 1$ for every page~$p$;
\item the \emph{bit model}: $\cost(p) = \size(p)$ for every page~$p$;
\item the \emph{fault model}: $\cost(p) = 1$ for every page~$p$.
\end{titemize}
General caching is still a~thriving area of research as may be documented by
the fact that a~randomized online algorithm with the optimal asymptotic
competitive ratio $\mathcal{O}(\log C)$ was given only recently by
Adamaszek~et~al.~\cite{log_k_general}; this article may also serve as
a~reference on the current state of knowledge.

Caching, as described so far, requires the service to load the requested page
when a~fault occurs, which is known as caching under the \emph{forced policy}.
Allowing the service to pay the fault cost without actually loading the
requested page to the cache gives another useful and studied variant of caching,
the \emph{optional policy}.

\myparagraph{Previous work.}
In this article, we consider the problem of finding the optimal service in the
\emph{offline version} of caching when the whole request sequence is known in
advance. Uniform caching is solvable in polynomial time with a~natural algorithm
known as Belady's rule~\cite{belady}. Caching in the cost model is a~special
case of the \emph{$k$-server problem} and is also solvable in polynomial
time~\cite{marek-servers}. In late 1990s, the questions about the complexity
status of general caching were raised. The situation was summed up by Albers et
al.~\cite{albers-general-caching}: \textit{``The hardness results for caching
problems are very inconclusive. The \NP-hardness result for the Bit model uses
a~reduction from \textsc{partition}, which has pseudopolynomial algorithms. Thus
a~similar algorithm may well exist for the Bit model. We do not know whether
computing the optimum in the Fault model is \NP-hard.''}

There was no improvement until a breakthrough in 2010 when Chrobak et
al.~\cite{marek-hardness} showed that general caching is strongly
\NP-hard, already in the case of the fault model as well as in the
case of the bit model.
General caching is usually studied under the assumption that the largest page size is very
small in comparison with the total cache size, as is for example the case of the
aforementioned article by Albers et al.~\cite{albers-general-caching}.
Instances of caching with pages larger than half of the cache size (so called
obstacles) are required in the proof given by Chrobak et al.
Therefore, this hardness result is in fact still quite inconclusive.

\myparagraph{Contribution.}
We give a~novel proof of strong \NP-hardness for general caching which gives the
first hardness result restricted to small pages:
\begin{theorem}
\label{thm:hardness}
General caching is strongly \NP-hard even in the case when the page
sizes are limited to \{1,\,2,\,3\}, for both the fault model and the
bit model, and under each of the forced and optional policies.
\end{theorem}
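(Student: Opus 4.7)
The plan is to establish Theorem~\ref{thm:hardness} by a polynomial-time reduction from a strongly \NP-hard graph problem; the natural choice is \IS restricted to graphs of bounded (e.g.\ cubic) degree, both because the restriction preserves strong \NP-hardness and because a constant degree bound will limit how many gadgets can simultaneously compete for cache space at any moment---crucial if all page sizes are to stay in $\{1,2,3\}$. I would give a single reduction that, after small parameter tweaks, works for both the fault and the bit model.

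Given a cubic graph $G=(V,E)$, the construction produces a request sequence composed of a \emph{vertex block} for each $v \in V$ and an \emph{edge gadget} for each $uv \in E$. A vertex block opens and closes with a request for a distinguished vertex page $p_v$ of size~$2$; keeping $p_v$ in the cache throughout the block saves exactly one fault. The edge gadget for $uv$ is a short sequence of size-$1$ and size-$3$ page requests, injected into the interval during which both $p_u$ and $p_v$ are \emph{alive}, that together with a carefully chosen cache capacity $C$ forces the cache to be so tight that $p_u$ and $p_v$ cannot both survive the gadget. With $T$ an explicit polynomial-time computable constant, the optimal fault cost will equal $T-s$, where $s$ is the number of vertex pages kept alive across their whole blocks; the edge constraint then forces the corresponding vertices to form an independent set in $G$, so minimizing faults coincides with maximizing an independent set.

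The main technical obstacle is designing an edge gadget that simultaneously (i)~uses only sizes in $\{1,2,3\}$, (ii)~enforces ``at most one of $p_u, p_v$ survives'' in both the fault model and the bit model, and (iii)~composes with the other $O(1)$ edge gadgets attached to the same vertex block without spurious cross-interaction. Cubic regularity makes (iii) locally tractable---each $p_v$ has to survive at most three edge gadgets---and (ii) can be approached by balancing the sizes and multiplicities of the auxiliary size-$1$ and size-$3$ pages that flank the gadget, so that the net saving of keeping $p_v$ amounts to a single fault in the fault model and to a fixed number of bits in the bit model. Extending the hardness from the forced to the optional policy should then follow from a standard exchange argument: on our tightly balanced instances, any ``paid-for but not loaded'' page in an optimal optional schedule can be rewritten as a ``loaded and immediately evicted'' page of the same cost, so the two optima coincide.
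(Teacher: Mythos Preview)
Your proposal is a plan, not a proof, and it skips the one step that carries all the difficulty. The crux of the fault model (and likewise the bit model) is that vertex pages and edge pages have \emph{the same} cost, so there is no a~priori reason an optimal schedule should serve the edge gadgets ``correctly''. If vertex pages were cheap (cost $1/(n+1)$, say) the argument you sketch does go through easily---edge pages are served optimally, every edge gadget occupies exactly one slot, and the size-$3$ page in each gadget blocks one endpoint. But with unit costs the optimal service may sacrifice a few edge-page faults to cache extra vertex pages, and then the invariant ``each edge uses exactly one slot'' fails. The paper's reduction handles this by \emph{repeating} each edge gadget $H=\Theta(mn)$ times and proving, via a potential-function argument over blocks, that the cumulative drift in per-edge occupancy is bounded by~$O(n)$; only then can one conclude that for every edge at least one size-$3$ page is cached while the cache is full. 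Nothing in your outline confronts this issue, and restricting to cubic graphs does not help: the failure mode is global (the service can be slightly off balance across \emph{all} edges simultaneously), not a local interference between the three gadgets touching one vertex.

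Two smaller points. First, you never actually exhibit an edge gadget with the promised properties; ``balancing the sizes and multiplicities'' is where the entire construction lives, and the paper's gadget (six pages $\bar a,\alpha,a,b,\beta,\bar b$ with a specific request pattern over four blocks) is not obvious. Second, your forced$\to$optional step is the wrong direction for a clean reduction: the exchange you describe (rewrite a skipped page as load-and-evict) requires a free cache slot at that moment, which on a ``tightly balanced'' instance you precisely do not have. The paper goes optional$\to$forced instead, by enlarging the cache by the maximum page size and interleaving fresh one-shot pages, which \emph{does} yield equal optima by a straightforward simulation.
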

The proof of the result for general costs (and sizes $\{1, 2, 3\}$) is
rather simple, in particular significantly simpler than the one given by
Chrobak et al.~\cite{marek-hardness}. The reductions for the result in
the fault and bit models are significantly more involved and require a
non-trivial potential-function-like argument.

\myparagraph{Relation to interval packing.}
As observed by Chrobak~et~al.~\cite{marek-hardness}, there is a~tight connection
between caching under the optional policy and the problem called \emph{interval
packing}. A~set of weighted intervals together with a~limit~$W$ is given in the
problem and the target is either to (a) choose a~subset of intervals of the
maximum cardinality such that at each point the total weight of intersected
intervals is limited by $W$ (corresponds to caching in the fault model), or (b)
choose a~subset of intervals of the maximum total weight such that at each point
the total weight of intersected intervals is limited by $W$ (corresponds to the
bit model). The problem is a~special case of the \emph{unsplittable flow problem
on line graphs}~\cite{bansal-unsplittable-flows-approx}. Our proofs can be
considered as proofs of strong \NP-hardness of either version of interval
packing. See Chrobak~et~al.~\cite{marek-hardness} for details on the connection
between interval packing and caching.

\myparagraph{Open problems.}
We prove that general caching with page sizes $\{1, 2, 3\}$ is strongly \NP-hard
while general caching with unit page sizes is easily polynomially solvable.
Closing the gap will definitely contribute to a~better understanding of the
problem:
\begin{myquestion}
Is general caching also (strongly) \NP-hard when page sizes are limited to
\{1,\,2\}? Can caching with page sizes \{1,\,2\} be solved in polynomial time,
at least in the bit or fault model?
\end{myquestion}
In a~broader perspective, the complexity status of general caching is still far
from being well understood as the best known result on approximation is
a~4-approximation due to Bar-Noy~et~al.~\cite{four-approximation} and there is
no result on the hardness of approximation. Therefore, better understanding of
the complexity of general caching remains an important challenge.
\begin{myquestion}
Is there an algorithm for general caching with an approximation ratio better
than~4, or even a~PTAS? Is general caching \APX-hard?
\end{myquestion}

\myparagraph{Outline.} The main part of our work -- a~polynomial-time reduction
from independent set to caching in the fault model under the optional policy
with page sizes restricted to \{1, 2, 3\} -- is explained in
Section~\ref{sec:reduction} and its validity is proven in
Section~\ref{sec:proof}. In Section~\ref{sec:bit_model}, we show how to modify
the reduction so that it works for the bit model as well. In
Section~\ref{sec:forced}, we show how to obtain the hardness results also for
the forced policy. Finally, we give a self-contained presentation of the simple
proof of strong \NP-hardness for general costs (in fact, only two different and
polynomial costs are needed) in Appendix~\ref{appendix:simple}.

\section{Reduction}
\label{sec:reduction}

The decision problem \IS is well-known to be \NP-complete. By
\threecaching{forced} and \threecaching{optional} we denote the decision
versions of caching under each policy with page sizes restricted to
$\{1, 2, 3\}$.

\vskip 0.2cm
\decproblem{\IS}
{A~graph $G$ and a~number~$K$.}
{Is there an independent set of cardinality~$K$ in $G$?}
\vskip 0.2cm
\decproblem{\threecaching{\textit{policy}}}
{A~universe of pages, a~sequence of page requests, numbers $C$ and
$L$. For each page~$p$ it holds $\size(p) \in \{1,2,3\}$.}
{Is there a~service under the policy \textit{policy} of the request sequence using
the cache of size~$C$ with a~total fault cost of at most~$L$?}
\vskip 0.2cm

We define \threecaching{fault,\textit{policy}} to be the problem
\threecaching{\textit{policy}} with the additional requirement that page costs
adhere to the fault model. The problem \threecaching{bit,\textit{policy}} is
defined analogously.

In this section, we describe a~polynomial-time reduction from \IS to
\threecaching{fault,optional}. Informally, a~set of pages of size two and three
is associated with each edge and a~page of size one is associated with each
vertex. Each vertex-page is requested only twice while there are many requests
on pages associated with edges. The request sequence is designed in such a~way
that the number of vertex-pages that are cached between the two requests in the
optimal service is equal to the the size of the maximum independent set.

We now show the request sequence of caching corresponding to the graph given in
\IS with a~parameter~$H$. In the next section, we prove that it is possible to
set a~proper value of $H$ and a~proper fault cost limit~$L$ such that the
reduction becomes a~valid polynomial-time reduction.

\begin{reduction}
\label{reduction:fault}
Let $G = (V,E)$ be the instance of \IS. The graph~$G$ has $n$~vertices and
$m$~edges and there is an arbitrary fixed order of edges $e_1, \ldots, e_m$.
Let $H$ be a~parameter bounded by a~polynomial function of $n$.

A~corresponding instance~$\II_G$ of \threecaching{fault,optional} is an instance
with the cache size $C = \mathit{2}mH + 1$ and the total of $\mathit{6}mH +
n$~pages. The structure of the pages and the requests sequence is described
below.
\end{reduction}

\myparagraph{Pages.}
For each vertex~$v$, we have a~vertex-page $p_v$ of size one. For each edge~$e$,
there are $6H$ edge-pages associated with it that are divided into $H$~groups.
The $i$th group consists of six pages $\aa^e_i, \alpha^e_i, a^e_i, b^e_i,
\beta^e_i, \bb^e_i$ where pages $\alpha^e_i$ and $\beta^e_i$ have size three and
the remaining four pages have size two.

For a~fixed edge~$e$, let $\aa^e$-pages be all pages $\aa^e_i$ for $i = 1,
\ldots, H$. Let also $\aa$-pages be all $\aa^e$-pages for $e = e_1, \ldots,
e_m$. The remaining collections of pages ($\alpha^e$-pages, $\alpha$-pages,
\ldots) are defined in a~similar fashion.

\myparagraph{Request sequence.}
The request sequence of~$\II_G$ is organized in phases and blocks. There is one
phase for each vertex~$v \in V$, we call such a~phase the $v$-phase. There are
exactly two requests on each vertex-page~$p_v$, one just before the beginning of
the $v$-phase and one just after the end of the $v$-phase; these requests do not
belong to any phase. The order of phases is arbitrary. In each $v$-phase, there
are $2H$~adjacent blocks associated with every edge~$e$ incident with $v$; the
blocks for different incident edges are ordered arbitrarily. In addition, there
is one initial block~$I$ before all phases and one final block~$F$ after all
phases. Altogether, there are $d = 4mH+2$ blocks.

Let $e = \{u, v\}$ be an edge, let us assume that the $u$-phase precedes the
$v$-phase. The blocks associated with~$e$ in the $u$-phase are denoted by
$B^e_{1,1}$, $B^e_{1,2}, \dots, B^e_{i,1}$, $B^e_{i,2}, \dots, B^e_{H,1}$,
$B^e_{H,2}$, in this order, and the blocks in the $v$-phase are denoted by
$B^e_{1,3}$, $B^e_{1,4}, \dots, B^e_{i,3}$, $B^e_{i,4}, \dots, B^e_{H,3}$,
$B^e_{H,4}$, in this order. An example is given in
Fig.~\ref{fig:phases-and-blocks}.

\vspace{1ex}

\begin{figure}[bth]
\centering
\includegraphics[scale=1]{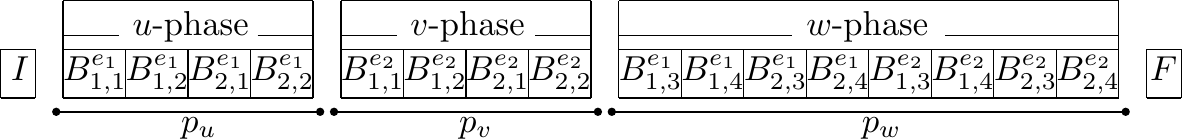}
\caption{An example of phases, blocks and requests on vertex-pages for a~graph
with three vertices $u$, $v$, $w$ and two edges $e_1 = \{u, w\}$, $e_2 = \{v,
w\}$ when $H = 2$}
\label{fig:phases-and-blocks}
\end{figure}

\vspace{2ex}
Even though each block is associated with some fixed edge, it contains one or
more requests to the associated pages for every edge $e$. In each block, we
process the edges in the order $e_1,\ldots,e_m$ that was fixed above. Pages
associated with the edge~$e$ are requested in two rounds. In each round, we
process groups $1, \ldots, H$ in this order. When processing the $i$th group of
the edge $e$, we request one or more pages of this group, depending on the block
we are in. Table~\ref{tab:requests} determines which pages are requested.

\newcolumntype{M}[1]{>{\centering\arraybackslash}m{#1}}
\newcolumntype{N}{@{}m{0pt}@{}}
\def\rowend{\\[6pt]}
\def\roundsep{\parbox[c]{.5em}{\tiny\textbullet}}

\begin{table}
\caption{Requests associated with an edge~$e$}
\vspace{-3ex}
\label{tab:requests}
\begin{center}
\begin{tabular}{|M{4cm}|M{2.5cm}M{0.2cm}M{2.5cm}|N}
\hline
Block & First round & \roundsep & Second round & \\[4pt]
\hline
before $B^e_{i,1}$ & $\aa^e_i$ & \roundsep & \rowend
\hline
$B^e_{i,1}$ & $\aa^e_i$, $\alpha^e_i$ & \roundsep & $b^e_i$ & \rowend
\hline
$B^e_{i,2}$ & $\alpha^e_i$, $a^e_i$ & \roundsep & $b^e_i$ & \rowend
\hline
between $B^e_{i,2}$ and $B^e_{i,3}$ & $a^e_i$ & \roundsep & $b^e_i$ & \rowend
\hline
$B^e_{i,3}$ & $a^e_i$ & \roundsep & $b^e_i$, $\beta^e_i$ & \rowend
\hline
$B^e_{i,4}$ & $a^e_i$ & \roundsep & $\beta^e_i$, $\bb^e_i$ & \rowend
\hline
after $B^e_{i,4}$ & & \roundsep & $\bb^e_i$ & \rowend
\hline
\end{tabular}
\end{center}
\vspace*{-1ex}
\end{table}

Reduction~\ref{reduction:fault} is now complete. An example of requests on
edge-pages associated with one edge~$e$ is depicted in
Fig.~\ref{fig:edge-pages-h3}. Notice that the order of the pages associated
with~$e$ is the same in all blocks; more precisely, in each block the requests
on the pages associated with~$e$ form a subsequence of
\begin{equation}
\label{eq:ordering}
\aa^e_1\, \alpha^e_1\, a^e_1\, \ldots\, \aa^e_i\, \alpha^e_i\, a^e_i\, \ldots\, \aa^e_H\,
\alpha^e_H\, a^e_H\; b^e_1\, \beta^e_1\, \bb^e_1\, \ldots\, b^e_i\, \beta^e_i\, \bb^e_i\, \ldots\,
b^e_H\, \beta^e_H\, \bb^e_H.
\end{equation}

\myparagraph{Preliminaries for the proof.}
Instead of minimizing the service cost, we maximize the savings compared to the
service which does not use the cache at all. This is clearly equivalent when
considering the decision versions of the problems.

Without loss of generality, we may assume that any page is brought into the
cache only immediately before some request to that page and removed from the
cache only after some (possibly different) request to that page; furthermore, at
the beginning and at the end the cache is empty. I.e., a~page may be in the
cache only between two consecutive requests to this page, and either remains in
the cache for the whole interval or not at all.

Each page of size~three is requested only twice in two consecutive blocks, and
these blocks are distinct for all pages of size three. Thus, a~service of
edge-pages is valid if and only if at each time, at most $mH$~edge-pages are in
the cache. It is convenient to think of the cache as of $mH$~\emph{slots} for
edge-pages.

As each vertex-page is requested twice, the savings on the $n$ vertex-pages are
at most~$n$. Furthermore, a vertex-page can be cached if and only if during the
phase it never happens that at the same time all slots for edge-pages are full
and a~page of size three is cached.

Let $S_B$ denote the set of all edge-pages cached at the beginning of the
block~$B$ and let $S_B^e$ be the set of pages in $S_B$ associated with the
edge~$e$. We use $s_B = |S_B|$ and $s_B^e = |S_B^e|$ for the sizes of the sets.
Each edge-page is requested only in a~contiguous segment of blocks, once in each
block. It follows that the total savings on edge-pages are equal to $\sum_B s_B$
where the sum is over all blocks. In particular, the maximal possible savings on
the edge-pages are $(d-1)mH$, using the fact that $S_I$ is empty. We shall show
that the maximum savings are $(d-1)mH + K$ where $K$ is the size of the maximum
independent set in~$G$.

\begin{figure}
\centering
\includegraphics[scale=1]{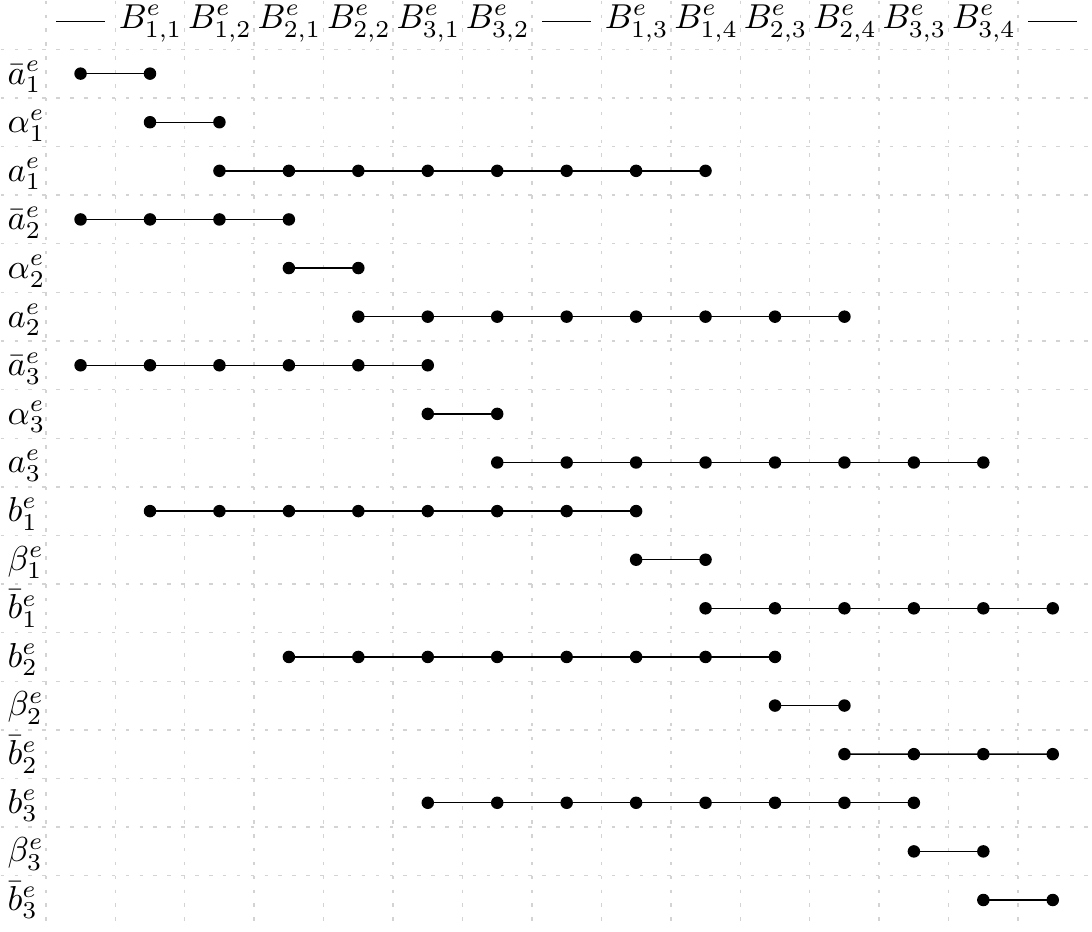}
\caption{Requests on all pages associated with the edge~$e$ when $H = 3$. Each
column represents some block(s). The labelled columns represent the blocks in
the heading, the first column represents every block before $B^e_{1,1}$, the
middle column represents every block between $B^e_{3,2}$ and $B^e_{1,3}$, and
the last column represents every block after $B^e_{3,4}$. The requests in one
column are ordered from top to bottom.}
\label{fig:edge-pages-h3}
\end{figure}

\myparagraph{Almost-fault model.}
To understand the reduction, we consider what happens if we relax the
requirements of the fault model and set the cost of each vertex-page to
$1/(n+1)$ instead of 1 as required by the fault model.

In this scenario, the total savings on vertex-pages are $n/(n+1) < 1$ which is
less than savings incurred by one edge-page. Therefore, edge-pages must be
served optimally in the optimal service of the whole request sequence.

In this case, the reduction works already for $H = 1$. This leads to a~quite
short proof of the strong \NP-hardness for general caching and we give this
proof in Appendix~\ref{appendix:simple}. Here, we show just the main ideas that
are important also for the design of our caching instance in the fault and bit
models.

We first prove that for each edge~$e$ and each block~$B \neq I$ we have $s^e_B =
1$ (see Appendix~\ref{appendix:simple}). Using this we show below that for each
edge~$e$, at least one of the pages $\alpha^e_1$ and $\beta^e_1$ is cached
between its two requests. This implies that the set of all vertices~$v$ such
that $p_v$ is cached between its two requests is independent.

For a~contradiction, let us assume that for some edge~$e$, neither of the pages
$\alpha^e_1$ and $\beta^e_1$ is cached between its two requests. Because pages
$\alpha^e_1$ and $\beta^e_1$ are forbidden, there is $b^e_1$ in $S_{B^e_{1,2}}$
and $a^e_1$ in $S_{B^e_{1,3}}$. Somewhere between these two blocks $B^e_{1,2}$
and $B^e_{1,3}$, we must switch from caching $b^e_1$ to caching $a^e_1$.
However, this is impossible, because the order of requests implies that we would
have to cache both $b^e_1$ and $a^e_1$ at some moment (see
Fig.~\ref{fig:edge-pages-h1}). However, there is no place in the cache for such
an operation, as $s^{e'}_B = 1$ for every $e'$ and $B \neq I$.

\begin{figure}[bth]
\begin{center}
\includegraphics[scale=1]{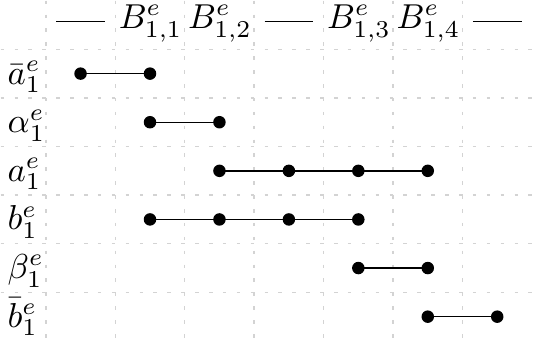}
\caption{Pages associated with one edge when $H = 1$}
\label{fig:edge-pages-h1}
\end{center}
\end{figure}

In the fault model, the corresponding claim $s_B^e = H$ does not hold. Instead,
we prove that the value of $s_B^e$ cannot change much during the service and
when we use $H$ large enough, we still get a~working reduction.

\section{Proof of Correctness}
\label{sec:proof}

In this section, we show that the reduction described in the previous section is
indeed a~reduction from \IS set to \threecaching{fault,optional}. We prove that
there is an independent set of cardinality~$K$ in~$G$ if and only if there is
a~service of the caching instance $\II_G$ with the total savings of at least
$(d-1)mH + K$. First the easy direction, which holds for any value of the
parameter~$H$.

\begin{lemma}
\label{l:easy_direction}
Let $G$ be a~graph and $\II_G$ the corresponding caching instance from
Reduction~\ref{reduction:fault}. Suppose that there is an independent set~$W$
of cardinality~$K$ in~$G$. Then there exists a~service of $\II_G$ with the total
savings of at least $(d-1)mH + K$.
\end{lemma}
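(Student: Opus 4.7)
The plan is to exhibit a service whose total savings are exactly $(d-1)mH+K$: cache the vertex-page $p_v$ throughout the $v$-phase for every $v\in W$ (contributing $K$), and cache exactly one page of every edge-group at the start of every non-$I$ block (contributing the maximum $(d-1)mH$, via the identity $\sum_B s_B$ stated in the preliminaries). For each edge $e=\{x,y\}$, assume without loss of generality that the $x$-phase precedes the $y$-phase, and use one of two mirror-symmetric schemes for every group $(e,i)$. Scheme~$A$ keeps $\aa^e_i$ cached from its load in $I$ until its last request in $B^e_{i,1}$, then $\alpha^e_i$ between its two requests in $B^e_{i,1}$ and $B^e_{i,2}$, then $a^e_i$ from $B^e_{i,2}$ through $B^e_{i,4}$, and finally $\bb^e_i$ from $B^e_{i,4}$ through $F$; the pages $b^e_i$ and $\beta^e_i$ are paid for (optional policy). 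Scheme~$B$ is the mirror image: keep $\aa^e_i$, $b^e_i$ (from $B^e_{i,1}$ through $B^e_{i,3}$), $\beta^e_i$ (between $B^e_{i,3}$ and $B^e_{i,4}$), and $\bb^e_i$, paying for $\alpha^e_i$ and $a^e_i$.

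In either scheme the four caching intervals partition the $d-1$ non-$I$ blocks, so each group contributes exactly one page to $S_B$ for every such $B$, giving $s_B=mH$ uniformly and $(d-1)mH$ edge-page savings regardless of which scheme is used per edge.

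The main obstacle is verifying cache feasibility. At the start of every non-$I$ block, the $mH$ cached edge-pages total at most $2(mH-1)+3=2mH+1=C$, since at most one of them is the size-$3$ page $\alpha^e_i$ or $\beta^e_i$. Inside a block, only the ``owner'' group whose $B^e_{i,1},\dots,B^e_{i,4}$ the current block is undergoes a transition; an atomic evict-then-load keeps its instantaneous peak at $\max(2,3)=3$, while the other groups hold a single size-$2$ page throughout. Hence the edge-page cache never exceeds $2mH+1$.

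The delicate point is that this bound already saturates $C$, leaving no room for any $p_v$ in a block containing a size-$3$ edge-page. The independence of $W$ resolves this. The size-$3$ page $\alpha^e_i$ of Scheme~$A$ is cached only during blocks of the $x$-phase, whereas $\beta^e_i$ of Scheme~$B$ is cached only during blocks of the $y$-phase. I therefore choose Scheme~$A$ when $x\notin W$ and Scheme~$B$ when $x\in W$; since $W$ is independent, $x\in W$ forces $y\notin W$, so in both cases the size-$3$ page is scheduled during a phase whose vertex is \emph{not} in $W$ and whose vertex-page is therefore absent from the cache. In every other block the edge-page cache uses only $2mH$ units, leaving one spare unit for the $p_v$ cached during the current $v$-phase. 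Adding the $K$ vertex-page savings to the $(d-1)mH$ edge-page savings yields the claimed bound.
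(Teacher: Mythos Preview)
Your proof is correct and follows essentially the same approach as the paper: your Schemes~$A$ and~$B$ are exactly the two caching patterns the paper uses (cache $\aa,\alpha,a,\bb$ versus cache $\aa,b,\beta,\bb$), and your selection rule ``Scheme~$B$ if $x\in W$, else Scheme~$A$'' coincides with the paper's ``if $u\in W$ \ldots otherwise \ldots''. You supply more detail on the within-block evict-then-load transitions, which the paper leaves implicit, but the core idea is identical.
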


\begin{proof}
For any edge~$e$, denote $e = \{u, v\}$ so that the $u$-phase precedes the
$v$-phase. If $u \in W$, we keep all $\aa^e$-pages, $b^e$-pages,
$\beta^e$-pages and $\bb^e$-pages in the cache from the first to the last
request on each page, but we do not cache $a^e$-pages and $\alpha^e$-pages at
any time. Otherwise, we cache all $\aa^e$-pages, $\alpha^e$-pages, $a^e$-pages
and $\bb^e$-pages, but do not cache $b^e$-pages and $\beta^e$-pages at any time.
Fig.~\ref{fig:easy-direction} shows these two cases for the first group of
pages. In both cases, at each time at most one page associated with each group
of each edge is in the cache and the savings on those pages are $(d-1)mH$. We
know that the pages fit in the cache because of the observations made in
Section~\ref{sec:reduction}.

For any $v\in W$, we cache $p_v$ between its two requests. To check that this
is a~valid service, observe that if $v\in W$, then during the corresponding
phase no page of size three is cached. Thus, the page~$p_v$ always fits in the
cache together with at most $mH$~pages of size two.
\qed
\end{proof}

\begin{figure}[bth]
\begin{center}
\includegraphics[scale=1]{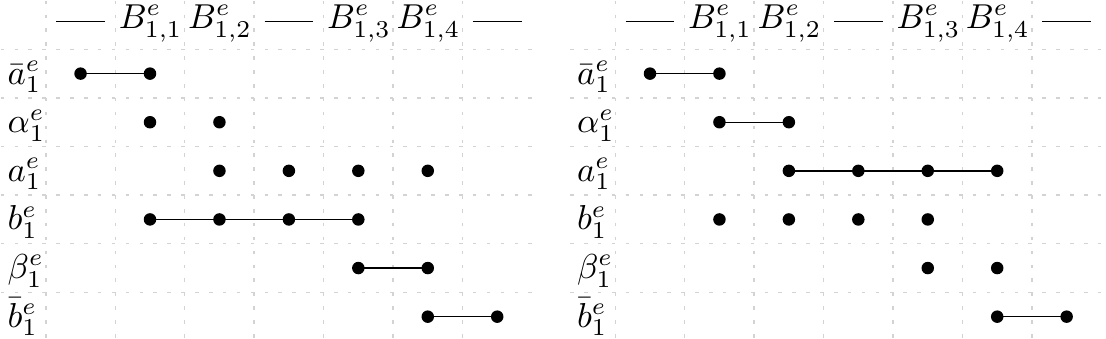}
\caption{The two ways of caching in Lemma~\ref{l:easy_direction}}
\label{fig:easy-direction}
\end{center}
\end{figure}

We prove the converse in a~sequence of lemmata. In Section~\ref{sec:bit_model},
we will show how to reuse the proof for the bit model. To be able to do that, we
list explicitly all the assumptions about the caching instance that are used in
the following proofs.

\begin{properties}
\label{the_properties}
Let~$\TT_G$ be an instance of general caching corresponding to a~graph $G =
(V,E)$ with $n$~vertices, $m$~edges $e_1, \ldots, e_m$, the same cache size and
the same universe of pages as in Reduction~\ref{reduction:fault}. The request
sequence is again split into phases, one phase for each vertex. Each phase is
again partitioned into blocks, there is one initial block~$I$ before all phases
and one final block~$F$ after all phases. There is the total of $d$~blocks.

The instance~$\TT_G$ is required to fulfill the following list of properties:
\vspace{-1ex}
\begin{enumerate}[label=(\alph*)]
\item \label{prprt:vertex_savings} Each vertex page~$p_v$ is requested exactly
twice, right before the $v$-phase and right after the $v$-phase.
\item \label{prprt:count_savings} The total savings incurred on edge-pages are
equal to $\sum s_B$ (summing over all blocks).
\item \label{prprt:blocks_I_F} For each edge~$e$, there are exactly~$H$ pages
associated with~$e$ requested in $I$, all the $\aa^e$-pages, and exactly~$H$
pages associated with~$e$ requested in $F$, all the $\bb^e$-pages.
\item \label{prprt:all_edge_ordering} In each block, pages associated with $e_1$
are requested first, then pages associated with $e_2$ are requested and so on up
to $e_m$.
\item \label{prprt:good_ordering} For each block~$B$ and each edge~$e$, all
requests on $a^e$-pages and $\bb^e$-pages in~$B$ precede all requests on
$\aa^e$-pages and $b^e$-pages in~$B$.
\item \label{prprt:alpha_beta} Let $e = \{u, v\}$ be an edge and $p$ an
$\alpha^e$-page or $\beta^e$-page. Let $B$ be the first block and $\overline{B}$
the last block where $p$ is requested. Then $B$ and $\overline B$ are either
both in the $u$-phase or both in the $v$-phase. Furthermore, no
other page of size three associated with $e$ is requested in $B$,
$\overline{B}$, or any block between them.
\end{enumerate}
\end{properties}

\begin{lemma}
\label{l:obey_properties}
The instance from Reduction~\ref{reduction:fault} satisfies
Properties~\ref{the_properties}.
\end{lemma}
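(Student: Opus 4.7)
The plan is to verify each of (a)--(f) by direct inspection of Reduction~\ref{reduction:fault} and Table~\ref{tab:requests}. Properties (a), (c), and (d) are immediate: (a) is exactly the stated pattern of requests on $p_v$; for (c), block~$I$ precedes every $B^e_{i,1}$ and block~$F$ follows every $B^e_{i,4}$, so the only applicable table rows give a single $\aa^e_i$-request per group in $I$ and a single $\bb^e_i$-request per group in $F$; and (d) restates the in-block edge ordering fixed in the reduction.

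For (b), I would use the observation already made in Section~\ref{sec:reduction} that each edge-page is requested exactly once in each block of one contiguous segment of the block sequence -- a claim made immediate by the table, since for each of the six page types the rows where it appears form a consecutive run. Under the normalization that a page is brought in only just before a request and evicted only just after a request, caching an edge-page across a single gap between two of its requests saves one fault (in the fault model) and places the page into $S_{B'}$ for exactly one block $B'$, namely the block containing the later request. Summing over all pages and gaps gives total edge-page savings equal to $\sum_B s_B$.

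Property (e) is the least mechanical part. Every $a^e$- and $\aa^e$-request lies in round~1 and every $b^e$-, $\bb^e$-request in round~2, which handles $a^e$ versus $b^e$ directly. For $a^e$ versus $\aa^e$ (both in round~1), if $a^e_i$ and $\aa^e_j$ coexist in some block then that block lies weakly between $B^e_{i,2}$ and $B^e_{j,1}$; since the $u$-phase blocks for $e$ occur in the order $B^e_{1,1}, B^e_{1,2}, B^e_{2,1}, B^e_{2,2}, \ldots$, this forces $i<j$, and round~1 processes groups in increasing order, so $a^e_i$ precedes $\aa^e_j$. The symmetric argument inside the $v$-phase handles $\bb^e$ versus $b^e$, and $\aa^e$ versus $\bb^e$ is vacuous because their request windows lie in the disjoint $u$- and $v$-phases.

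Finally, for (f), Table~\ref{tab:requests} shows $\alpha^e_i$ is requested only in the adjacent blocks $B^e_{i,1}, B^e_{i,2}$, both in the $u$-phase, and $\beta^e_i$ only in the adjacent $B^e_{i,3}, B^e_{i,4}$, both in the $v$-phase, so the set of blocks strictly between the first and last occurrence is empty in each case. Inspecting the rows applicable to every other group $i'\neq i$ in these four blocks confirms no other size-three page of $e$ is requested there. The main obstacle is (e), where the round/group interaction must be disentangled carefully; the remaining properties are essentially bookkeeping.
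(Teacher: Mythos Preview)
Your proposal is correct and follows essentially the same route as the paper: direct verification of each property from the construction, with (e) singled out as the only nontrivial one. The only real difference is presentational. For~(e) the paper simply tabulates the actual request sequence in every block type (its Table~\ref{the_table}) and reads off that whenever $a^e_i$ is present no $\aa^e_j$ with $j\le i$ is, and whenever $\bb^e_i$ is present no $\aa^e$-page and no $b^e_j$ with $j\le i$ is; together with the fixed ordering~\eqref{eq:ordering} this gives~(e). You instead reason about the \emph{interval of blocks} in which each page type occurs and use the ordering of $B^e_{i,1},B^e_{i,2},\dots$ within a phase to force $i<j$ whenever the relevant pair coexists in a block. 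Both arguments are short and yield the same case analysis; yours has the mild advantage of not requiring the reader to scan a table, while the paper's is more mechanically checkable.
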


\begin{proof}
All properties \ref{prprt:vertex_savings}, \ref{prprt:count_savings},
\ref{prprt:blocks_I_F}, \ref{prprt:all_edge_ordering},
\ref{prprt:alpha_beta} follow directly from Reduction~\ref{reduction:fault} and
the subsequent observations. To prove~\ref{prprt:good_ordering}, recall that the
pages associated with an edge~$e$ requested in a~particular block always follow
the ordering~\eqref{eq:ordering}. We need to verify that when the page $a^e_i$
is requested, no page $\aa^e_j$ for $j \leq i$ is requested and that when the
page $\bb^e_i$ is requested, no $\aa^e$-page and no page $b^e_j$ for $j \leq i$
is requested. This can be seen easily when we explicitly write down the request
sequences for each kind of block, see Table~\ref{the_table}.
\qed
\end{proof}

\begin{table}[t]
\caption{Request sequences on all pages associated with an edge $e$}
\vspace{-2ex}
\label{the_table}
\begin{center}
\begin{tabular}{|M{4cm}|M{7cm}|N}
\hline
Block & First round \roundsep{} Second round & \\[4pt]
\hline
before $B^e_{1,1}$ & $\aa^e_1 \ldots \aa^e_H$ \roundsep & \rowend

$B^e_{i,1}$ & $a_1^e \ldots a_{i-1}^e \; \aa_i^e \; \alpha_i^e \; \aa_{i+1}^e
\aa_{i+2}^e \ldots \aa_H^e$ \roundsep{} $b_1^e \dots b_i^e$ & \rowend

$B^e_{i,2}$ & $a_1^e \ldots a_{i-1}^e \; \alpha_i^e \; a_i^e \; \aa_{i+1}^e
\aa_{i+2}^e \ldots \aa_H^e$ \roundsep{} $b_1^e \dots b_i^e$ & \rowend

between $B^e_{H,2}$ and $B^e_{1,3}$ & $a_1^e \ldots a_H^e$ \roundsep{} $b_1^e \ldots
b_H^e$ & \rowend

$B^e_{i,3}$ & $a_i^e \ldots a_H^e$ \roundsep{} $\bb_1^e \ldots \bb_{i-1}^e \; b_i^e \;
\beta_i^e \; b_{i+1}^e b_{i+2}^e \ldots b_H^e$ & \rowend

$B^e_{i,4}$ & $a_i^e \ldots a_H^e$ \roundsep{} $\bb_1^e \ldots \bb_{i-1}^e \;\beta_i^e \;
\bb_i^e \; b_{i+1}^e b_{i+2}^e \ldots b_H^e$ & \rowend

after $B^e_{H,4}$ & \roundsep{} $\bb^e_1 \ldots \bb^e_H$ & \rowend
\hline
\end{tabular}
\end{center}
\vspace*{-5ex}
\end{table}

For the following claims, let $\TT_G$ be an instance fulfilling
Properties~\ref{the_properties}. We fix a~service of~$\TT_G$ with the total
savings of at least $(d-1)mH$.

Let $\BB$ be the set of all blocks and $\BBb$ the set of all blocks except for
the initial and final one. For a~block~$B$, we denote the block immediately
following it by~$B'$.

We define two useful values characterizing the service for the block $B$:
$\delta_B = mH - s_B$ (the number of free slots for edge-pages at the start of
the service of the block) and $\gamma_B^e = \abs{s_{B'}^e - s_B^e}$ (the change
of the number of slots occupied by pages associated with~$e$ after requests from
this block are served).

The first easy lemma says that only a~small number of blocks can start with some
free slots in the cache.

\begin{lemma}
\label{l:deltas}
When summing over all blocks except for the initial one
$$\sum_{\mathclap{B \in \BB\setminus\{I\}}} \delta_B \leq n.$$
\end{lemma}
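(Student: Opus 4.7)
The plan is to exploit the counting identity already established: by Property~\ref{the_properties}\ref{prprt:count_savings}, the total savings on edge-pages equal $\sum_B s_B$, and the total savings on vertex-pages are bounded by~$n$ because each of the $n$ vertex-pages has unit cost (size one in both the fault and bit models) and is requested exactly twice by Property~\ref{the_properties}\ref{prprt:vertex_savings}. Since we fixed a service with total savings at least $(d-1)mH$, this gives
\[
\sum_{B\in\BB} s_B \;\geq\; (d-1)mH - n.
\]

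Next, I would use that the cache starts empty, so $s_I = 0$, hence $\sum_{B\in\BB\setminus\{I\}} s_B = \sum_{B\in\BB} s_B$. There are exactly $d-1$ blocks in $\BB\setminus\{I\}$, and each $s_B \leq mH$ because there are only $mH$ edge-page slots in the cache. Plugging in the definition $\delta_B = mH - s_B$, I would conclude
\[
\sum_{\mathclap{B\in\BB\setminus\{I\}}} \delta_B \;=\; (d-1)mH - \sum_{\mathclap{B\in\BB\setminus\{I\}}} s_B \;\leq\; (d-1)mH - \bigl((d-1)mH - n\bigr) \;=\; n,
\]
which is what the lemma claims.

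There is essentially no obstacle: the argument is just accounting. The only subtlety worth stating cleanly is that the bound on vertex-page savings holds in both models of interest (fault and bit), which is why the lemma is formulated at the level of Properties~\ref{the_properties} rather than tied to a specific cost model; this generality will matter when the same lemma is reused for the bit model in Section~\ref{sec:bit_model}.
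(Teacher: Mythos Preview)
Your proof is correct and follows essentially the same argument as the paper: both combine Property~\ref{the_properties}\ref{prprt:count_savings} with $s_I=0$, the bound of~$n$ on vertex-page savings from Property~\ref{the_properties}\ref{prprt:vertex_savings}, and the assumption that total savings are at least $(d-1)mH$, then rewrite via $\delta_B = mH - s_B$. The only difference is presentational; your version spells out the intermediate inequality $\sum_B s_B \geq (d-1)mH - n$ explicitly before converting to~$\delta_B$.
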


\begin{proof}
Using the property~\ref{prprt:count_savings} and
$s_I = 0$, the savings on edge-pages are
$$\sum_{\mathclap{B \in \BB \setminus \{I\}}} s_B = (d-1)mH
- \sum_{\mathclap{B \in\BB \setminus \{I\}}}
\delta_B.$$
The total savings are assumed to be at least $(d-1)mH$. Due to the
property~\ref{prprt:vertex_savings}, the savings on vertex-pages are at most~$n$. Claim of
the lemma follows.
\qed
\end{proof}

The second lemma states that the number of slots occupied by pages associated
with a~given edge does not change much during the whole service.

\begin{lemma}
\label{l:gammas}
For each edge $e \in E$,
$$\sum_{B \in \BBb} \gamma^e_B \leq 6n.$$
\end{lemma}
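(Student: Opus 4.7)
The plan is to bound the total variation $\sum_{B \in \BBb} \gamma^e_B$ of the sequence $(s^e_B)_B$ by combining the savings-deficit bound from Lemma~\ref{l:deltas} with the structural constraints provided by Properties~\ref{the_properties}. A natural first move is to decompose $s^e_B = \sum_\tau s^{e,\tau}_B$ over the six page types $\tau \in \{\aa, \alpha, a, b, \beta, \bb\}$ associated with~$e$ and then apply the triangle inequality to get $\gamma^e_B \leq \sum_\tau |s^{e,\tau}_{B'} - s^{e,\tau}_B|$. The factor $6$ in the claim strongly suggests that the author proves, for each type separately, that the total variation of $s^{e,\tau}_B$ across $\BBb$ is bounded by~$n$.

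For each type~$\tau$ I would then argue, via a charging argument, that every unit of variation in $s^{e,\tau}_B$ can be paired with a unit of savings deficit $\delta_{B''}$ at some nearby block. Property~\ref{prprt:good_ordering} is central here: the fact that $a^e$- and $\bb^e$-requests precede $\aa^e$- and $b^e$-requests inside each block prevents ``free'' swaps between the two canonical patterns of Lemma~\ref{l:easy_direction} without momentarily requiring an extra slot. Similarly, Property~\ref{prprt:alpha_beta} confines the size-three pages $\alpha^e, \beta^e$ to a single short interval where they can be live, so their cached status cannot oscillate freely either. Together these constraints should force any reconfiguration that changes $s^{e,\tau}_B$ either to consume a unit of $\delta$ or to correspond to a single structurally forced transition.

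The hardest case will be the ``middle'' types $a^e$ and $b^e$, which are live across long segments of the request sequence and can, in principle, be added to or dropped from the cache in many different blocks. For these I would define a potential $\Phi^e(B)$ that measures the misalignment between the $e$-cache at the start of $B$ and one of the two canonical patterns (``$u \in W$'' or ``$v \in W$'') from Lemma~\ref{l:easy_direction}. The goal is to show, using the ordering from Property~\ref{prprt:good_ordering} together with the fact that at each moment at most $mH$ edge-pages fit in the cache, that each contribution to $\gamma^e_B$ is absorbed either by $\Phi^e(B) - \Phi^e(B')$ or by a multiple of $\delta_B + \delta_{B'}$. Summing and telescoping, and combining with Lemma~\ref{l:deltas}, would then yield the claimed bound $\sum_{B \in \BBb} \gamma^e_B \leq 6n$. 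The crux, and hence the main obstacle, is designing this potential so that its total variation is genuinely controlled by the global savings deficit rather than by the number of blocks (which is polynomially larger than~$n$).
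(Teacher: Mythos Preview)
Your decomposition by page type cannot work as stated. The intermediate claim ``for each type~$\tau$, the total variation of $s^{e,\tau}_B$ over $\BBb$ is at most $n$'' is false already for the canonical services from Lemma~\ref{l:easy_direction}. Take the $\aa^e$-pages: in the service with $u\in W$, the page $\aa^e_i$ is cached from the initial block up to $B^e_{i,1}$ and then dropped, so $s^{e,\aa}_B$ decreases from $H$ down to $0$ in unit steps across the $u$-phase. The total variation of $s^{e,\aa}_B$ is therefore $H = 6mn + 3n + 1 \gg n$. The same happens for $a^e$-, $b^e$-, and $\bb^e$-pages. The factor $6$ is a red herring here; it does not come from the six page types.

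The paper's proof uses a completely different slicing: instead of splitting $s^e_B$ by page type, it looks at the \emph{prefix sums over edges}, setting $s^{\leq k}_B = |S^{e_1}_B \cup \cdots \cup S^{e_k}_B|$, and proves $\sum_{B\in\BBb}|s^{\leq k}_{B'} - s^{\leq k}_B| \leq 3n$ for every $k$. The key property is~\ref{prprt:all_edge_ordering}, not~\ref{prprt:good_ordering}: because edges are processed in the fixed order $e_1,\ldots,e_m$ inside every block, right after edge $e_k$ has been handled in block $B$ the cache holds exactly $(s_B - s^{\leq k}_B) + s^{\leq k}_{B'}$ edge-pages, whence $s^{\leq k}_{B'} - s^{\leq k}_B \leq \delta_B$. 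Summing the positive increments and using Lemma~\ref{l:deltas} gives $\leq n$; telescoping against the boundary values $s^{\leq k}_{I'}, s^{\leq k}_F \in [kH - \delta, kH]$ (from property~\ref{prprt:blocks_I_F}) bounds the negative increments by $2n$. Finally $\gamma^{e_k}_B \leq |s^{\leq k}_{B'}-s^{\leq k}_B| + |s^{\leq k-1}_{B'}-s^{\leq k-1}_B|$ yields $3n+3n=6n$. Property~\ref{prprt:good_ordering}, which you rely on, plays no role in this lemma; it is used only later in Lemma~\ref{l:alpha_beta}.
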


\begin{proof}
Let us use the notation $S_B^{\leq k} = S_B^{e_1} \cup \cdots \cup S_B^{e_k}$
and $s_B^{\leq k} = \Bigl|S_B^{\leq k}\Bigr|$. First, we shall prove for
each~$k \leq m$
\begin{equation}
\label{e:prfx_bnd}
\sum_{B \in \BBb} \Bigl|s_{B'}^{\leq k} - s_B^{\leq k}\Bigr| \leq 3n.
\end{equation}

Let~$\PP$ denote the set of all blocks~$B$ from $\BBb$ satisfying $s_{B'}^{\leq
k} - s_B^{\leq k} \geq 0$ and let $\NN$ denote the set of all the remaining
blocks from~$\BBb$.

As a~consequence of the property~\ref{prprt:blocks_I_F}, we get $s_{I'}^{\leq
k} \in [kH - \delta_{I'}, kH]$ and $s_{F}^{\leq k} \in [kH - \delta_F, kH]$.
So we obtain the inequality
\begin{equation}
\label{e:init_final}
s_F^{\leq k} - s_{I'}^{\leq k} \geq -\delta_F.
\end{equation}

We claim $s^{\leq k}_{B'} - s^{\leq k}_B \leq \delta_B$ for each $B \in \BBb$.
We assume for a~contradiction $s^{\leq k}_{B'} - s^{\leq k}_B > \delta_B$ for
some block~$B$. We use the property~\ref{prprt:all_edge_ordering}. Then after
processing the edge~$e_k$ in $B$, the number of edge-pages in the cache is $(s_B
- s_B^{\leq k}) + s_{B'}^{\leq k} > s_B + \delta_B = mH$. But more than
$mH$~edge-pages in the cache means a~contradiction.

The summation over all blocks from~$\PP$ gives us the first bound
\begin{equation}
\label{e:bound_p}
\sum_{B \in \PP} \left(s_{B'}^{\leq k} - s_B^{\leq k}\right) \leq \sum_{B \in \PP}
\delta_B \leq n.
\end{equation}

Using the fact $\PP \mathbin{\dot{\cup}} \NN = \BBb$ and \eqref{e:init_final},
we have $$\sum_{B \in \PP} \left(s_{B'}^{\leq k} - s_B^{\leq k}\right) + \sum_{B
\in \NN} \left(s_{B'}^{\leq k} - s_B^{\leq k}\right) = s_F^{\leq k} -
s_{I'}^{\leq k} \geq -\delta_F;$$ together with \eqref{e:bound_p}, we obtain the
second bound
\begin{equation}
\label{e:bound_n}
-\sum_{B \in \NN} \left(s_{B'}^{\leq k} - s_B^{\leq k}\right)
\leq \sum_{B \in \PP} \left(s_{B'}^{\leq k} - s_B^{\leq k}\right) + \delta_F \leq 2n.
\end{equation}

Combining the bounds \eqref{e:bound_p} and \eqref{e:bound_n}, we prove~\eqref{e:prfx_bnd}
$$\sum_{B \in \BBb} \abs{s_{B'}^{\leq k} - s_B^{\leq k}} = \sum_{B \in \PP}
\left(s_{B'}^{\leq k} - s_B^{\leq k}\right) - \sum_{B \in \NN} \left(s_{B'}^{\leq k} -
s_B^{\leq k}\right) \leq n + 2n = 3n.$$

For the edge~$e_1$, the claim of this lemma is weaker than~\eqref{e:prfx_bnd}
because $\gamma^{e_1}_B = \abs{s_{B'}^{e_1} - s_B^{e_1}}$. Proving our lemma for
$e_k$ when $k > 1$ is just a~matter of using \eqref{e:prfx_bnd} for $k-1$ and
$k$ together with the formula $\abs{x-y} \leq \abs{x} + \abs{y}$:
$$
~\qquad~
\sum_{B \in \BBb} \gamma^{e_k}_B \leq \sum_{B \in \BBb} \abs{s_{B'}^{\leq k} - s_B^{\leq
k}} + \sum_{B \in \BBb} \abs{s_{B'}^{\leq k-1} - s_B^{\leq k-1}} \leq 3n + 3n = 6n
~\qquad~
\qed
$$
\end{proof}

For the rest of the proof, we set $H = 6mn + 3n + 1$. This enables us to show
that the fixed service must cache some of the pages of size three.

\begin{lemma}
\label{l:alpha_beta}
For each edge $e \in E$, there is a~block~$B$ such that some $\alpha^e$-page or
$\beta^e$-page is in~$S_B$ and $\delta_B = 0$.
\end{lemma}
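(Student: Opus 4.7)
The plan is to argue by contradiction: suppose for some edge $e$ every block $B$ containing an $\alpha^e$- or $\beta^e$-page in $S_B$ satisfies $\delta_B\ge 1$. By property~\ref{prprt:alpha_beta} combined with the WLOG assumption that pages are cached only between consecutive requests, page $\alpha^e_i$ can appear in $S_B$ only at $B=B^e_{i,2}$ and $\beta^e_i$ only at $B=B^e_{i,4}$. Call group $i$ \emph{good} if $\alpha^e_i\in S_{B^e_{i,2}}$ or $\beta^e_i\in S_{B^e_{i,4}}$, and \emph{bad} otherwise. Each good group witnesses at least one block $B\in\{B^e_{i,2},B^e_{i,4}\}$ with $\delta_B\ge 1$, and these witness blocks are distinct across groups, so by Lemma~\ref{l:deltas} there are at most $n$ good groups, hence at least $H-n$ bad groups.

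Next I would analyze a single bad group~$i$. Combining the WLOG assumption with badness gives $s^{e,i}_{B^e_{i,2}}\le 1$ (only $b^e_i$ is eligible, since $\aa^e_i$ is past its last request, $\alpha^e_i$ is excluded by badness, and $a^e_i$, $\beta^e_i$, $\bb^e_i$ have not yet been requested) and symmetrically $s^{e,i}_{B^e_{i,4}}\le 1$ (only $a^e_i$ eligible). Because the size-three ``bridges'' $\alpha^e_i$ and $\beta^e_i$ are absent, the per-group profile $s^{e,i}_B$ cannot smoothly transition between the $b^e_i$-dominated side and the $a^e_i$-dominated side without visible change: either both $a^e_i$ and $b^e_i$ are simultaneously cached somewhere in the middle range (a rise to $2$ matched by a later drop) or else $s^{e,i}_B$ drops to $0$ at some middle block (a drop matched by a later rise).

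The main obstacle is that per-group jumps in $s^{e,i}_B$ could in principle cancel when summed into $s^e_B=\sum_j s^{e,j}_B$. To handle this I would mimic the cumulative argument in the proof of Lemma~\ref{l:gammas} but indexed by groups within edge $e$: define $s^{e,\le j}_B=\sum_{k\le j}s^{e,k}_B$ and use the fact that the pages of edge $e$ within each block are requested in group order~\eqref{eq:ordering} to establish a mid-edge peak inequality analogous to $s^{\le k}_{B'}-s^{\le k}_B\le \delta_B$, yielding a bound on $\sum_B|s^{e,\le j}_{B'}-s^{e,\le j}_B|$ in terms of $n$. Translating the bad-group jumps into irreducible contributions to these cumulative sums and summing over the at least $H-n$ bad groups would then force $H-n=O(n)$, contradicting the choice $H=6mn+3n+1$ (which was made precisely so that $H-n>6n$) and completing the proof.
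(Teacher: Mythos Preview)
There is a real gap in the proposed argument. Your dichotomy for a bad group --- ``either both $a^e_i$ and $b^e_i$ are simultaneously cached (a rise to $2$) or $s^{e,i}_B$ drops to $0$'' --- does not translate into block-boundary variation of $s^{e,i}_B$. Consider a bad group with $b^e_i\in S_B$ and $a^e_i\in S_{B'}$ for two consecutive middle blocks $B,B'$: then $s^{e,i}_B=s^{e,i}_{B'}=1$, yet by property~\ref{prprt:good_ordering} the request on $a^e_i$ in $B$ precedes the request on $b^e_i$, so both pages are cached simultaneously \emph{inside} block $B$. The spike to $2$ is transient and invisible at block boundaries; such a bad group contributes nothing to the cumulative sums $\sum_B|s^{e,\le j}_{B'}-s^{e,\le j}_B|$, and your intended lower bound of order $H-n$ on these sums collapses. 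A second obstacle is that within a block the pages of edge $e$ are processed in two rounds (first all $\aa/\alpha/a$ in group order, then all $b/\beta/\bb$ in group order, cf.~\eqref{eq:ordering}); the group prefix $\le j$ is therefore not a temporal prefix, and the analogue of the inequality $s^{\le k}_{B'}-s^{\le k}_B\le\delta_B$ from Lemma~\ref{l:gammas} has no straightforward formulation at the group level.

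The paper avoids per-group bookkeeping altogether. It introduces a single potential $\Phi_B$ counting the $a^e$- and $\bb^e$-pages in $S_B$ for the whole edge $e$. Property~\ref{prprt:good_ordering} is precisely what makes this work: since $a^e$- and $\bb^e$-pages are requested before $\aa^e$- and $b^e$-pages in every block, loading a new $a^e$- or $\bb^e$-page during block $B$ must use either a free slot (at most $\delta_B+\sum_{\ell<k}\gamma^{e_\ell}_B$ of those, after earlier edges are processed) or a slot freed by evicting a size-three page (at most $\epsilon_B$ of those). Summing the resulting bound $\Phi_{B'}-\Phi_B\le\delta_B+\sum_{\ell<k}\gamma^{e_\ell}_B+\epsilon_B$ and using $\Phi_{I'}=0$, $\Phi_F\ge H-\delta_F$ together with Lemmata~\ref{l:deltas} and~\ref{l:gammas} gives $\sum_B\epsilon_B\ge n+1$; hence an $\alpha^e$- or $\beta^e$-page is cached at the start of more than $n$ blocks, and by Lemma~\ref{l:deltas} at least one of them has $\delta_B=0$. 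The potential captures exactly the within-block transition that your block-boundary counts miss.
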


\begin{proof}
Fix an edge $e = e_k$. For each block~$B$, we define
$$\epsilon_B = \mbox{number of $\alpha^e$-pages and $\beta^e$-pages in $S_B$.}$$

Observe that due to the property~\ref{prprt:alpha_beta}, $\epsilon_B$ is always one or zero. We use
a~potential function $$\Phi_B = \mbox{number of $a^e$-pages and $\bb^e$-pages in $S_B$}.$$

Because there are only $\aa$-pages in the initial block and only $\bb$-pages in
the final block (property~\ref{prprt:blocks_I_F}), we know
\begin{equation}
\label{e:phi_ends}
\Phi_{I'} = 0 \quad\mbox{and}\quad \Phi_{F} \geq H - \delta_F.
\end{equation}

Now we bound the increase of the potential function as
\begin{equation}
\label{e:phi_increase}
\Phi_{B'} - \Phi_{B} \leq \delta_B + \sum_{\ell = 1}^{k-1} \gamma_B^{e_\ell} + \epsilon_B.
\end{equation}

To justify this bound, we fix a~block~$B$ and look at the cache state after
requests on edges $e_1, \ldots, e_{k-1}$ are processed. How many free slots
there can be in the cache? There are initial $\delta_B$ free slots in the
beginning of the block~$B$, and the number of free slots can be further
increased when the number of pages in the cache associated with $e_1, \ldots,
e_{k-1}$ decreases. This increase can be naturally bounded by $\sum_{\ell =
1}^{k-1} \gamma^{e_\ell}_B$. Therefore, the number of free slots in the cache is
at most $\delta_B + \sum_{\ell = 1}^{k-1} \gamma^{e_\ell}_B$.

Because of the property~\ref{prprt:good_ordering}, the number of cached
$a^e$-pages and $\bb^e$-pages can only increase by using the free cache space or
caching new pages instead of $\alpha^e$-pages and $\beta^e$-pages. We already
bounded the number of free slots and $\epsilon_B$ is a~natural bound for the
increase gained on $\alpha^e$-pages and $\beta^e$-pages. Thus, the
bound~$\eqref{e:phi_increase}$ is correct.

Summing \eqref{e:phi_increase} over all $B \in \BBb$, we have
$$\Phi_F - \Phi_{I'}
= \sum_{B \in \BBb} \left(\Phi_{B'} - \Phi_B\right)
\leq \sum_{B \in \BBb} \biggl( \delta_B + \sum_{\ell=1}^{k-1} \gamma^{e_\ell}_B +
\epsilon_B \biggr)$$
which we combine with \eqref{e:phi_ends} into
$$H - \delta_F \leq \sum_{B\in \BBb} \biggl( \delta_B +
\sum_{\ell=1}^{k-1} \gamma^{e_\ell}_B + \epsilon_B \biggr),$$
and use Lemmata~\ref{l:deltas} and \ref{l:gammas} to bound $\sum \epsilon_B$ as
\begin{align*}
\sum_{B \in \BBb} \epsilon_B
& \geq H - \delta_F - \sum_{B \in \BBb} \biggl( \delta_B +
\sum_{\ell=1}^{k-1} \gamma^{e_\ell}_B \biggr)\\
& \geq H - n - n - (k-1)6n\\
& \geq H - 6mn - 2n = n + 1.
\end{align*}

As there is at most one page of size three requested in each block
(property~\ref{prprt:alpha_beta}), the inequality $\sum \epsilon_B \geq n+1$
implies that there are at least $n+1$ blocks where an $\alpha^e$-page or
a~$\beta^e$-page is cached. At most $n$~blocks have $\delta_B$ non-zero
(Lemma~\ref{l:deltas}); we are done.
\qed
\end{proof}

We are ready to complete the proof of the harder direction.

\begin{lemma}
\label{l:hard_direction}
Suppose that there exists a~service of~$\TT_G$ with the total savings of at
least $(d-1)mH + K$. Then the graph~$G$ has an independent set~$W$ of
cardinality~$K$.
\end{lemma}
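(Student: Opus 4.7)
The plan is to split the hypothesized savings of at least $(d-1)mH + K$ into an edge-page part and a vertex-page part, and then rule out any edge having both endpoints cached as vertex-pages by invoking Lemma~\ref{l:alpha_beta} together with a cache-capacity count.

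First I would use that $s_I = 0$ and $s_B \leq mH$ for every block, so by property~\ref{prprt:count_savings} the savings on edge-pages are at most $(d-1)mH$. Combined with the hypothesis and property~\ref{prprt:vertex_savings}, this forces the savings on vertex-pages to be at least $K$. Hence the set
$$W = \{\, v \in V : p_v \text{ is cached between its two requests}\,\}$$
has $|W| \geq K$; it remains to show $W$ is independent, after which any $K$-element subset gives the desired independent set.

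Next I would assume for contradiction that some edge $e = \{u,v\}$ has $u,v \in W$. Apply Lemma~\ref{l:alpha_beta} to obtain a block $B$ with $\delta_B = 0$ and some $\alpha^e$- or $\beta^e$-page $p \in S_B$. By property~\ref{prprt:alpha_beta}, $B$ lies in the $u$-phase or the $v$-phase; without loss of generality, say the $u$-phase. Since $u \in W$, the vertex-page $p_u$ is in the cache at the start of $B$. But $\delta_B = 0$ says $s_B = mH$ edge-pages are already cached, one of them ($p$) of size $3$ and the other $mH-1$ of size at least $2$, so they alone occupy at least $3 + 2(mH-1) = 2mH + 1 = C$ units; adding $p_u$ of size $1$ exceeds the cache, a contradiction.

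The genuinely load-bearing step is that last cache-occupancy count: it is the precise reason the construction uses $C = 2mH + 1$ rather than $2mH$, so that a full edge-page cache containing any size-$3$ page leaves no room for even a size-$1$ vertex-page. Everything else is bookkeeping against the listed properties, so once Lemma~\ref{l:alpha_beta} is in hand the independence of $W$ falls out.
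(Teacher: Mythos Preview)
Your proposal is correct and follows essentially the same route as the paper's proof: bound the edge-page savings by $(d-1)mH$ to force at least $K$ cached vertex-pages, then for each edge invoke Lemma~\ref{l:alpha_beta} and property~\ref{prprt:alpha_beta} to see that a size-three page fills the cache during one endpoint's phase, precluding that vertex-page from being cached. Your explicit size count $3 + 2(mH-1) = C$ just spells out what the paper compresses into ``the cache is full.''
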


\begin{proof}
Let~$W$ be a~set of $K$~vertices such that the corresponding page~$p_v$ is
cached between its two requests. (There are at least $K$ of them because the
maximal savings on edge-pages are $(d-1)mH$.)

Consider an arbitrary edge~$e = \{u, v\}$. Due to Lemma~\ref{l:alpha_beta},
there exists a~block~$B$ such that $\delta_B = 0$ and some $\alpha^e$-page or
$\beta^e$-page is cached in the beginning of the block. This block~$B$ is either
in the $u$-phase or in the $v$-phase, because of the statement of the
property~\ref{prprt:alpha_beta}. This means that at least one of the two
pages~$p_u$ and $p_v$ is not cached between its two requests, because the cache
is full. As a~consequence, the set~$W$ is indeed independent.
\qed
\end{proof}

The value of $H$ was set to $6mn + 3n + 1$, therefore
Reduction~\ref{reduction:fault} is indeed polynomial.
Lemmata~\ref{l:easy_direction}, \ref{l:obey_properties} and
\ref{l:hard_direction} together imply that there is an independent set of
cardinality $K$ in $G$ if and only if there is a~service of the instance $\II_G$
with the total savings of at least $(d-1)mH + K$. We showed that the
problem~\threecaching{fault,optional} is indeed strongly \NP-hard.

\section{Bit Model}
\label{sec:bit_model}

In this section, we show how to modify the proof for the fault model from the
previous sections so that it works as a~proof for the bit model as well.

\begin{reduction}
\label{reduction:bit}
Let $G$ be a~graph and $\II_G$ the~corresponding instance of the problem
\threecaching{fault,optional} from Reduction~\ref{reduction:fault}. Then the
modified instance $\IIt_G$ is an instance of \threecaching{bit,optional} with
the same cache size and the same set of pages with the same sizes.

The structure of phases and requests on vertex-pages is also preserved. The
blocks from $\II_G$ are also used, but between each pair of consecutive blocks
there are five new blocks inserted. Let $B$ and $B'$ be two consecutive blocks.
Between $B$ and $B'$ we insert five new blocks $B_{(1)}, \dots, B_{(5)}$ with
the following requests
\begin{titemize}
\item $B_{(1)}$: do not request anything;
\item $B_{(2)}$: request all pages of size two that are requested both in $B$ and $B'$;
\item $B_{(3)}$: request a~page (there is either one or none) of size three that is requested both in
$B$ and $B'$;
\item $B_{(4)}$: request all pages of size two that are requested both in $B$ and $B'$;
\item $B_{(5)}$: do not request anything.
\end{titemize}
\end{reduction}

See Fig.~\ref{fig:instance-modification} for an example. In each new block, the
order of chosen requests is the same as in $B$ (which is the same as in $B'$, as
both follow the same ordering of edges~\eqref{eq:ordering}).
The new instance has the total of $\tilde{d} = d
+ 5(d-1) = 6d - 5$ blocks. This time we prove that the maximal total savings
are $(\tilde{d}-1)mH + K$ where $K$ is the cardinality of the maximum
independent set in~$G$.

\begin{figure}[bth]
\centering
\includegraphics[scale=1]{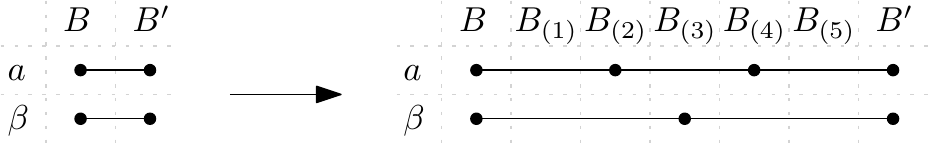}
\caption{The modification of the instance for a page~$a$ of size two and
a page~$\beta$ of size three}
\label{fig:instance-modification}
\end{figure}

\begin{lemma}
\label{l:easy_direction_bit}
Suppose that the graph~$G$ has an independent set~$W$ of cardinality~$K$. Then there
exists a~service of the modified instance~$\IIt_G$ with the total savings of at least
$(\tilde{d}-1)mH + K$.
\end{lemma}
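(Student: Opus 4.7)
The plan is to lift the caching strategy from Lemma~\ref{l:easy_direction} to the modified instance: for each edge $e=\{u,v\}$ with the $u$-phase preceding the $v$-phase, keep the $\aa^e$-, $b^e$-, $\beta^e$-, and $\bb^e$-pages in the cache throughout their lifetimes if $u\in W$, and otherwise keep the $\aa^e$-, $\alpha^e$-, $a^e$-, and $\bb^e$-pages; additionally cache $p_v$ between its two requests for each $v\in W$. At any moment at most one page per (edge, group) pair is cached, so at most $mH$ edge-pages occupy the cache, and the inserted blocks $B_{(1)},\ldots,B_{(5)}$ only re-request pages already resident, so capacity is preserved through them. The same case analysis as in Lemma~\ref{l:easy_direction}, using independence of~$W$, shows that whenever $v\in W$ no size-$3$ page is cached during the $v$-phase, so $p_v$ fits alongside the size-$2$ edge-pages into the $C=2mH+1$ units.

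For the saving count in the bit model, each cached gap between two consecutive requests to a page~$p$ contributes $\size(p)$ to the total. A size-$2$ page requested in $r$ consecutive original blocks is additionally requested in $B_{(2)}$ and $B_{(4)}$ between every pair of consecutive original occurrences, hence $3r-2$ times in the modified sequence; kept throughout, it yields $3(r-1)$ cached gaps and thus $6(r-1)$ in savings. A size-$3$ page, requested in two consecutive original blocks, gains one extra request in the inserted $B_{(3)}$ between them, for two cached gaps and a contribution of~$6$.

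Fix an edge~$e$ and group~$i$ and consider the case $u\in W$. The three cached size-$2$ pages have lifetimes $[I,B^e_{i,1}]$, $[B^e_{i,1},B^e_{i,3}]$, and $[B^e_{i,4},F]$; using that $B^e_{i,3}$ and $B^e_{i,4}$ are adjacent in the block order, these intervals cover every one of the $d$ blocks with only $B^e_{i,1}$ counted twice, so the three lifetime lengths sum to $d+1$. Their joint saving is therefore $6\bigl((d+1)-3\bigr)=6(d-2)$, and $\beta^e_i$ adds a further~$6$, for a per-group total of $6(d-1)=\tilde{d}-1$. The case $u\notin W$ is analogous, with $[B^e_{i,2},B^e_{i,4}]$ replacing $[B^e_{i,1},B^e_{i,3}]$ (the adjacency of $B^e_{i,1}$ with $B^e_{i,2}$ yields the same telescoping) and $\alpha^e_i$ replacing $\beta^e_i$. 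Summing over all $mH$ groups gives edge-page savings of $(\tilde{d}-1)mH$, and the $K$ cached vertex-pages each add~$1$, producing the claimed bound $(\tilde{d}-1)mH+K$.

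The only delicate step is verifying the telescoping identity $r^{\aa^e_i}+r^{b^e_i}+r^{\bb^e_i}=d+1$ (and its analogue in the other case), which rests on the adjacency of $B^e_{i,1}$ with $B^e_{i,2}$ and of $B^e_{i,3}$ with $B^e_{i,4}$ inside each phase; once that is observed, the remainder is a direct transcription of the fault-model argument.
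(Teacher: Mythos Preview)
Your proof is correct and follows the same high-level approach as the paper: lift the caching strategy of Lemma~\ref{l:easy_direction} verbatim to the modified instance, verify validity as before, and then recount savings in the bit model.

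The only real difference is bookkeeping. The paper observes once that for \emph{any} edge-page cached across a pair of consecutive original blocks, the savings change from~$1$ in the fault model to~$6$ in the bit model (a size-two page is requested three extra times in $B_{(2)},B_{(4)}$ yielding three gaps worth~$2$ each; a size-three page is requested once in $B_{(3)}$ yielding two gaps worth~$3$ each). Since Lemma~\ref{l:easy_direction} already established total edge-page savings of $(d-1)mH$ in the original instance, the paper simply multiplies by~$6$ to obtain $6(d-1)mH=(\tilde d-1)mH$. You instead recompute the savings from scratch, page by page, via a telescoping identity on the lifetime lengths of $\aa^e_i,b^e_i,\bb^e_i$ (resp.\ $\aa^e_i,a^e_i,\bb^e_i$) summing to $d+1$. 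This works, and your adjacency observations about $B^e_{i,1},B^e_{i,2}$ and $B^e_{i,3},B^e_{i,4}$ are exactly what makes the telescoping go through, but it duplicates the accounting already implicit in the fault-model lemma. The paper's argument is shorter precisely because it reuses that earlier count rather than redoing it.
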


\begin{proof}
We consider the service of the original instance~$\II_G$ described in the proof
of Lemma~\ref{l:easy_direction} and modify it so it becomes a~service of the
modified instance.

In the new service, vertex-pages are served the same way as in the original
service. The savings on vertex-pages are thus again $K$.

For each pair of consecutive blocks $B$ and $B'$, each page kept in the cache
between $B$ and $B'$ in the original service is kept in the cache in the new
service for the whole time between $B$ and $B'$ (it spans over seven blocks
now). For a~page of size two, savings of two are incurred three times. For
a~page of size three, savings of three are incurred twice. On each page in the
new service we save six instead of one. Therefore, the total savings on
edge-pages are $6(d-1)mH = (\tilde{d}-1)mH$.

The total savings are $(\tilde{d}-1)mH + K$.
\qed
\end{proof}

\begin{lemma}
\label{l:hard_direction_bit}
Suppose that there exists a~service of the modified instance~$\IIt_G$ with the
total savings of at least $(\tilde{d}-1)mH + K$. Then the graph~$G$ has an
independent set~$W$ of cardinality~$K$.
\end{lemma}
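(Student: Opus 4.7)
The plan is to verify that the modified instance $\IIt_G$, viewed in the bit model with $\tilde d$ blocks, satisfies Properties~\ref{the_properties} in full, and then to invoke Lemma~\ref{l:hard_direction} directly on $\IIt_G$: given a service with savings at least $(\tilde d - 1)mH + K$, the lemma will produce an independent set of cardinality $K$ in $G$.

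The structural conditions~\ref{prprt:vertex_savings}, \ref{prprt:blocks_I_F}, \ref{prprt:all_edge_ordering}, \ref{prprt:good_ordering}, and~\ref{prprt:alpha_beta} transfer from $\II_G$ almost by inspection. The vertex-page requests, the initial and final blocks $I$ and $F$ together with their page content, the order $e_1,\ldots,e_m$ of edges in each block, and the confinement of each $\alpha^e$- or $\beta^e$-page to a single phase are all preserved. The inserted blocks $B_{(1)},\ldots,B_{(5)}$ only re-request pages common to $B$ and $B'$ in the same relative order as in~\eqref{eq:ordering}; hence any inequality between the indices of $a^e$- and $\aa^e$-pages (and analogously $\bb^e$- and $b^e$-pages) that holds in $B$ still holds in each sub-block.

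The main obstacle is property~\ref{prprt:count_savings}, because in the bit model each fault is charged $\size(p)$ rather than $1$, and one must verify that the total savings on edge-pages still equal $\sum_B s_B$. This is precisely where the five inserted sub-blocks earn their keep. Each edge-page is requested on a contiguous range of original blocks, and the insertion rule is calibrated so that between two consecutive requests of an edge-page of size $s \in \{2, 3\}$ there are exactly $s$ blocks of $\IIt_G$: size-$2$ pages re-appear in $B_{(2)}$ and $B_{(4)}$, while size-$3$ pages re-appear in $B_{(3)}$. Under the standard WLOG assumption on services, every ``hit'' on such a page contributes exactly $s$ to the savings (since $\cost(p) = s$) and exactly $s$ to $\sum_B s_B$ (since the page lies in $S_B$ in precisely $s$ consecutive blocks). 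Summing over all hits on all edge-pages gives the identity required by~\ref{prprt:count_savings}.

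With Properties~\ref{the_properties} established for $\IIt_G$, the choice $H = 6mn + 3n + 1$ remains valid (it depends only on $m$ and $n$), and Lemma~\ref{l:hard_direction} applies to $\IIt_G$ without any further change, delivering the desired independent set $W$ of cardinality $K$.
\qed
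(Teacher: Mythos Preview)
Your proof is correct and follows essentially the same approach as the paper: verify that $\IIt_G$ satisfies Properties~\ref{the_properties} (with property~\ref{prprt:count_savings} being the only nontrivial check, handled by the same per-hit accounting that matches the bit-model cost $s$ of a page to the $s$ consecutive blocks in which it appears in $S_B$), and then invoke Lemma~\ref{l:hard_direction}. The paper's own proof is organized identically, differing only in phrasing the accounting for~\ref{prprt:count_savings} as assigning one unit of savings to each of the $s$ blocks rather than counting $s$ blocks per hit.
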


\begin{proof}
This lemma is the same as Lemma~\ref{l:hard_direction}. We just need to verify
that the modified instance fulfills Properties~\ref{the_properties}.

To prove that the property~\ref{prprt:count_savings} is preserved, we observe
that each two consecutive requests on a~page of size two are separated by
exactly one block where the page is not requested. Consequently, when there are
savings of two on a~request on the page of size two, we assign savings of one to
the block where the savings were incurred and savings of one to the previous
block. Similarly, each pair of consecutive requests on a~page of size three is
separated by exactly two blocks where the page is not requested. When there are
savings of three on a~request on the page of size three, we assign savings of
one to the block where the savings were incurred and savings of one to each of
the two previous blocks. As a~consequence, the total savings gained on
edge-pages may indeed be computed as $\sum s_B$.

The property~\ref{prprt:vertex_savings} is preserved because the requests on
vertex-pages are the same in both instances. The property~\ref{prprt:blocks_I_F}
is preserved because the initial and final blocks are the same in both
instances.

Each sequence of requests in a~block of the modified instance~$\IIt_G$ is either
the same or a~subsequence of the sequence in a~block in the original instance.
Therefore, the properties \ref{prprt:all_edge_ordering},
\ref{prprt:good_ordering} and \ref{prprt:alpha_beta} are preserved as well.
\qed
\end{proof}

Lemmata~\ref{l:easy_direction_bit} and \ref{l:hard_direction_bit} imply that we
have a~valid polynomial-time reduction and so the problem
\threecaching{bit,optional} is strongly \NP-hard.

\section{Forced Policy}
\label{sec:forced}

\begin{theorem}
\label{thm:optional_to_forced}
Both the problem \threecaching{fault,forced} and the problem
\threecaching{bit,forced} are strongly \NP-hard.
\end{theorem}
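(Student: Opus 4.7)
The plan is to reduce from the optional variants, which are already known to be strongly \NP-hard by the previous sections, rather than redoing the reduction from~\IS. The general observation underlying the reduction is that in both the fault and the bit models, the per-request fault cost is the same under either policy; the only difference is that the forced policy cannot ``refuse'' a requested page -- it must momentarily load it. Consequently, if at every point where an optional service refuses a page~$p$ the cache has $\ge s(p)$ free units, the forced policy can simulate optional by a \emph{load-and-immediately-evict} operation at no extra cost. Our task is to modify the instance so that enough slack is always available.

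Concretely, given an instance $\II$ output by Reduction~\ref{reduction:fault} (respectively Reduction~\ref{reduction:bit}), I would construct a new instance $\II^F$ by (i)~enlarging the cache by $s_{\max}=3$, (ii)~adding a small family of ``guard'' pages of total size~$3$ to the universe, and (iii)~interleaving the original request sequence with many requests on the guard pages, densely enough that in \emph{any} optimal service the guard pages are cached throughout almost the entire sequence. The guard-page interleaving is arranged so that the additional savings attainable on them are a fixed, easily-computable constant~$c_0$ that does not depend on the graph~$G$. The $3$ extra cache units then behave, in the optimal forced service, as a reserved scratch area in which the forced policy may momentarily place a page it wishes to ``refuse''.

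The correctness argument would split as usual:
\begin{enumerate}
\item A counting step, analogous to Lemma~\ref{l:deltas} and Lemma~\ref{l:gammas}, showing that using the 3 extra slots for anything other than the guard pages loses more savings than it gains; hence optimal forced services of $\II^F$ cache the guards almost everywhere and their contribution to savings equals $c_0$.
\item The easy direction: given an independent set $W$ of size $K$, take the optional service constructed in Lemma~\ref{l:easy_direction} (or Lemma~\ref{l:easy_direction_bit}), simultaneously keep the guard pages cached, and implement every refusal as a load-and-evict into the guard slots. This yields a forced service with savings $(d-1)mH+K+c_0$ (respectively $(\tilde d-1)mH+K+c_0$).
\item The hard direction: a forced service of $\II^F$ with savings at least $(d-1)mH+K+c_0$ projects (by erasing the guard pages) to a service of $\II$ satisfying Properties~\ref{the_properties} with savings at least $(d-1)mH+K$, so Lemma~\ref{l:hard_direction} (or its bit-model analogue) produces the required independent set.
\end{enumerate}

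The main obstacle is designing the guard pages and their request pattern so that the eviction-reload cycle triggered by a simulated refusal is truly cost-neutral: each eviction of a guard must be ``paid for'' by a forthcoming guard request that would be a fault anyway, and no cascading faults may be introduced. A secondary obstacle, specific to the bit model, is ensuring that the block structure inserted in Reduction~\ref{reduction:bit} continues to satisfy Properties~\ref{the_properties} after interleaving the guard requests; this is handled by placing the guard requests in the gaps $B_{(1)}, B_{(5)}$ that already contain no original requests, so that properties \ref{prprt:all_edge_ordering}, \ref{prprt:good_ordering} and \ref{prprt:alpha_beta} are preserved verbatim.
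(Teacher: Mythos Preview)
Your high-level plan---reduce from the optional variants by enlarging the cache by $s_{\max}=3$ and adding auxiliary pages so that refusals can be simulated by load-and-evict---is exactly the paper's. But the guard-page mechanism you sketch carries the very tension you flag as the ``main obstacle,'' and it does not go away. If the guards are requested densely enough that any near-optimal forced service keeps them resident (which you need for step~1 and the projection in step~3), then simulating a refusal must temporarily evict a guard; the next guard request, which would otherwise have been a hit, becomes a fault. Thus the forward direction loses one fault per simulated refusal and the savings no longer match. Conversely, if guard requests are sparse enough that such evictions are free, you lose the guarantee that the extra three units are unavailable to the original pages, and the projection in step~3 breaks. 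Your proposal acknowledges this obstacle but does not resolve it, and the ``counting step'' you allude to cannot repair an exact-savings equivalence once refusals incur genuine extra cost.

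The paper avoids the whole issue with a much simpler device. After enlarging the cache by $M=3$, it interleaves the original sequence $r_1\ldots r_n$ with requests $q_1,\ldots,q_n$ to $n$ \emph{pairwise distinct} new pages of size~$M$, each requested exactly once. Because no $q_i$ recurs, these pages admit zero savings regardless of the service, so the savings of the two instances coincide exactly. For the forward direction the extra $M$ units are always available for a load-and-evict: any previously loaded $q_j$ can be evicted for free since it is never requested again. For the backward direction the forced load of $q_i$ automatically squeezes the original pages down to at most $C$ units at that moment, which is precisely what the projection to an optional service needs. No guard accounting, no constant $c_0$, and no appeal to Properties~\ref{the_properties} are required; the reduction works for arbitrary instances of \threecaching{optional}, not just those coming from Reductions~\ref{reduction:fault} and~\ref{reduction:bit}.
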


\begin{proof}
For both the fault model and the bit model, we show a~polynomial-time reduction
from caching with optional policy to the corresponding variant of caching with
the forced policy. Let us have an instance of caching with the optional policy
with the cache size~$C$ and the request sequence $\rho = r_1 \ldots r_n$; let
$M$ be the maximal size of a page in $\rho$ (in our previous reductions, $M=3$).

We create an instance of caching with the forced policy. The cache size is $C' =
C + M$. The request sequence is $\rho' = r_1 q_1 r_2 q_2 \ldots r_n q_n$ where
$q_1, \ldots, q_n$ are requests to $n$~different pages that do not appear
in~$\rho$ and have size $M$. The costs of the new pages are one in the fault
model and $M$ in the bit model.

We claim that there is a~service of the optional instance with savings~$S$
if and only if there is a~service of the forced instance with savings~$S$.

$\Rightarrow$ We serve the requests on original pages the same way as in the
optional instance. The cache is larger by $M$ which is the size of the largest
page. Thus, pages that were not loaded into the cache because of the optional
policy fit in there; we can load them and immediately evict them. New pages fit
into the cache as well and we also load them and immediately evict them. This
way we have the same savings as in the optional instance.

$\Leftarrow$ We construct a~service for the optional instance: For each~$i$,
when serving $r_i$ we consider the evictions done when serving~$r_i$ and $q_i$
of the forced instance. If a~page requested before $r_i$ is evicted, we evict it
as well. If a~page requested by $r_i$ is evicted, we do not cache it at all.
Because the page requested by $q_i$ has size $M$, the original pages occupy at
most $C$~slots in the cache when~$q_i$ is served. This way we obtain a~service
of the optional instance with the same savings.

Using the strong \NP-hardness of the problems \threecaching{fault,optional} and
\threecaching{bit,optional} proven in Sections~\ref{sec:reduction}
and~\ref{sec:proof} and the observation that the reduction preserves the maximal
size of a page, we obtain the strong \NP-hardness of \threecaching{fault,forced}
and \threecaching{bit,forced}.
\qed
\end{proof}

\bibliographystyle{splncs03}
\bibliography{caching}

\newpage
\appendix

\section{The Simple Proof}
\label{appendix:simple}

In this appendix, we present a~simple variant of the proof for the almost-fault
model with two distinct costs. This completes the sketch of the proof presented
at the end of Section~\ref{sec:reduction}. We present it with a~complete
description of the simplified reduction, so that it can be read independently
of the rest of the paper. This appendix can therefore serve as a~short proof of
the hardness of general caching.

\begin{theorem}
\label{thm:simple_hardness}
General caching is strongly \NP-hard, even in the case when page sizes are
limited to \{1,\,2,\,3\} and there are only two distinct fault costs.
\end{theorem}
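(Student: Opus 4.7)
The plan is to reuse the structure of Reduction~\ref{reduction:fault} but with $H=1$ and with only two distinct costs: each vertex-page $p_v$ gets cost $1$, and each edge-page gets cost $N := n+1$. The cache size becomes $C = 2m+1$ and the request sequence is precisely the one from Reduction~\ref{reduction:fault} specialized to $H=1$. The total number of blocks is $d = 4m+2$. I will show that the maximum total savings equal $(d-1)m\cdot N + K$ where $K$ is the size of a maximum independent set in $G$, which gives the desired reduction from \IS.

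The easy direction ($\Rightarrow$) is essentially Lemma~\ref{l:easy_direction} with $H=1$: for each edge $e = \{u,v\}$ with $u$-phase before $v$-phase, cache one page per group according to whether $u \in W$ or not, and additionally cache $p_v$ between its two requests for every $v \in W$. This yields savings of $(d-1)m\cdot N + K$.

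For the hard direction ($\Leftarrow$), the crucial leverage comes from the cost gap $N > n$: since the total savings obtainable from all vertex-pages is at most $n < N$, any optimal service must be globally optimal on edge-pages. This forces $s_B^e = 1$ for every edge $e$ and every block $B \neq I$ (otherwise we would sacrifice a full edge-page saving of $N$). With this rigid structure in hand, I would then prove, for each edge $e$, that at least one of the pages $\alpha^e$ or $\beta^e$ is cached between its two requests. This is exactly the switching argument sketched in the almost-fault discussion in Section~\ref{sec:reduction}: if neither size-three page is cached, then $b^e \in S_{B^e_{1,2}}$ and $a^e \in S_{B^e_{1,3}}$ (to maintain $s_B^e=1$), and the ordering of requests inside the blocks between them would require $a^e$ and $b^e$ to coexist in the cache, but no slot is free because every $s_{B}^{e'} = 1$. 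Consequently, during one of the two phases incident to $e$, a size-three edge-page fills the cache, so the corresponding vertex-page $p_u$ or $p_v$ cannot be cached; hence the set $W$ of vertices whose vertex-page is cached between its two requests is independent, of size $K$.

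The main obstacle is the careful verification that $s_B^e = 1$ for all $e$ and all $B \neq I$ under the optimality assumption. The argument must rule out configurations where some $s_B^e = 0$ is compensated by $s_B^{e'} = 2$ for a different edge $e'$, and similar local trades; the cleanest way is to observe that in any valid service, $\sum_B s_B \le (d-1)m$ (following the counting from Section~\ref{sec:reduction}), so total edge-page savings are at most $(d-1)m\cdot N$, with equality forcing $s_B^e = 1$ everywhere except $I$. Once this is established, the rest of the argument is purely combinatorial and avoids the potential-function machinery needed for the general $H$ case, which explains why this proof is significantly shorter than that of Theorem~\ref{thm:hardness}.
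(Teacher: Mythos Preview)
Your overall plan is the same as the paper's: take Reduction~\ref{reduction:fault} with $H=1$ and separate the edge-page and vertex-page costs by a factor exceeding $n$ (the paper uses costs $\{1/(n+1),1\}$, you use $\{1,n+1\}$; these are equivalent by scaling). The easy direction and the final switching argument are also as in the paper.

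There is, however, a genuine gap exactly where you flag the ``main obstacle.'' From $\sum_B s_B \le (d-1)m$ and $s_I=0$, equality only gives $s_B = m$ for every $B \neq I$; it does \emph{not} give $s_B^e = 1$ per edge. The trade you yourself name --- some $s_B^e=0$ compensated by $s_B^{e'}=2$ --- is perfectly compatible with $s_B=m$ at every block, so your ``cleanest way'' does not rule it out. The missing ingredient is the fixed processing order $e_1,\dots,e_m$ inside every block (property~\ref{prprt:all_edge_ordering}). With $s_B^{\le k} := |S_B^{e_1}\cup\cdots\cup S_B^{e_k}|$, one shows $s_B^{\le k}=k$ for all $k$ and all $B\neq I$: the boundary cases $B=I'$ and $B=F$ are immediate from property~\ref{prprt:blocks_I_F}, and if ever $s_B^{\le k} < s_{B'}^{\le k}$, then right after processing $e_k$ in block $B$ the cache holds $(S_B\setminus S_B^{\le k})\cup S_{B'}^{\le k}$, which has cardinality $(m-s_B^{\le k})+s_{B'}^{\le k}>m$, contradicting the $m$-slot bound. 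From $s_B^{\le k}=k$ for all $k$ one reads off $s_B^e=1$. Once you insert this argument, the rest of your outline goes through.

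A minor correction in the switching step: from $s_B^e=1$ and the absence of $\alpha^e,\beta^e$ you get $b^e\in S_{B^e_{1,2}}$ and $a^e\in S_{B^e_{1,4}}$ (not $S_{B^e_{1,3}}$, where either of $a^e,b^e$ could be the cached page). The conclusion is unchanged: some block $B$ between them has $b^e\in S_B$ and $a^e\in S_{B'}$, and since $a^e$ is requested before $b^e$ within $B$, both are cached simultaneously together with one page for each other edge, exceeding $m$ slots.
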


We prove the theorem for the optional policy. It is easy to obtain the
theorem also for the forced policy the same way as in the proof of 
Theorem~\ref{thm:optional_to_forced}. 

\subsection*{The Reduction}

The reduction described here will be equivalent to
Reduction~\ref{reduction:fault} with $H = 1$ and the fault cost of each
vertex-page set to $1/(n+1)$.

Suppose we have a~graph $G=(V,E)$ with $n$~nodes and $m$~edges. We construct an
instance of general caching whose optimal solution encodes a~maximum
independent set in $G$. Fix an arbitrary numbering of edges $e_1, \ldots, e_m$.

The cache size is $C=2m+1$. For each vertex~$v$, we have a~vertex-page~$p_v$
with size one and cost $1/(n+1)$. For each edge $e$, we have six associated
edge-pages $a^e, \aa^e, \alpha^e, b^e, \bb^e, \beta^e$; all have cost one,
pages $\alpha^e,\beta^e$ have size three and the remaining pages have size two.

The request sequence is organized in phases and blocks. There is one
phase for each vertex. In each phase, there are two adjacent blocks
associated with every edge $e$ incident with $v$; the incident edges
are processed in an arbitrary order. In addition, there is one initial
block~$I$ before all phases and one final block~$F$ after all
phases. Altogether, there are $d=4m+2$ blocks. There are four blocks
associated with each edge $e$; denote them $B^e_1$, $B^e_2$,
$B^e_3$, $B^e_4$, in the order as they appear in the request
sequence.

For each $v\in V$, the associated page $p_v$ is requested exactly twice, right
before the beginning of the $v$-phase and right after the end of the $v$-phase;
these requests do not belong to any phase. An example of the structure of
phases and blocks is given in Fig.~\ref{fig:simple-phases-and-blocks}.

\begin{figure}[bth]
\centering
\includegraphics[scale=1]{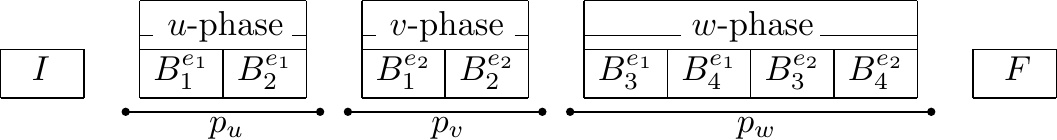}
\caption{An example of phases, blocks and requests on vertex-pages for a~graph
with three vertices $u$, $v$, $w$ and two edges $e_1 = \{u, w\}$, $e_2 = \{v,
w\}$ when $H = 2$}
\label{fig:simple-phases-and-blocks}
\end{figure}
\vspace{1ex}

Even though each block is associated with some fixed edge, it contains one or
more requests to the associated pages for every edge $e$.  In each block, we
process the edges $e_1,\ldots,e_m$ in this order. For each edge $e$, we make
one or more requests to the associated pages as follows. If the current block
is:

$\bullet$  before $B^e_1$: request $\aa^e$;

$\bullet$ $B^e_1$: request $\aa^e$, $\alpha^e$, and $b^e$;

$\bullet$ $B^e_2$: request $\alpha^e$, $a^e$, and $b^e$;

$\bullet$ after $B^e_2$ and before $B^e_3$: request $a^e$ and $b^e$;

$\bullet$ $B^e_3$: request $a^e$, $b^e$, and $\beta^e$;

$\bullet$ $B^e_4$: request $a^e$, $\beta^e$, and $\bb^e$;

$\bullet$ after $B^e_4$: request $\bb^e$.

Fig.~\ref{fig:simple-edge-pages} shows an example of the requests on edge-pages
associated with one particular edge.

\begin{figure}
\begin{center}
\includegraphics[scale=1]{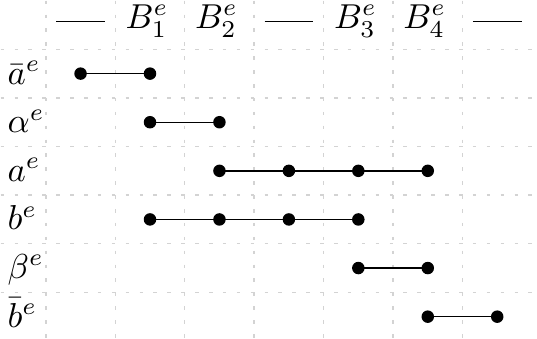}
\end{center}
\caption{Requests on all pages associated with the edge~$e$.  Each column
represents some block(s). The four labelled columns represent the blocks in the
heading, the first column represents every block before $B^e_1$, the middle
column represents every block between $B^e_3$ and $B^e_4$, and the last column
represents every block after $B^e_4$. The requests in one column are ordered
from top to bottom.}
\label{fig:simple-edge-pages}
\end{figure}

\subsection*{Proof of Correctness}

Instead of minimizing the service cost, we maximize the savings compared to the
service which does not use the cache at all. This is clearly equivalent when
considering the decision version of the problem.

Without loss of generality, we assume that any page is brought into the cache
only immediately before a request to that page and removed from the cache only
immediately after a request to that page; furthermore, at the beginning and at
the end the cache is empty. I.e., a page may be in the cache only between two
consecutive requests to this page, and either it is in the cache for the whole
interval or not at all.

Each page of size~three is requested only twice in two consecutive blocks, and
these blocks are distinct for all pages of size three. Thus, a~service of
edge-pages is valid if and only if at each time, at most $m$~edge-pages are in
the cache. It is thus convenient to think of the cache as of $m$~\emph{slots}
for edge-pages.

Each vertex-page is requested twice. Thus, the savings on the $n$ vertex-pages
are at most~$n/(n+1)<1$. Since all edge-pages have cost one, the optimal
service must serve them optimally.  Furthermore, a vertex-page can be cached if
and only if during the phase it never happens that at the same time all slots
for edge-pages are full and a page of size three is cached.

Let $S_B$ denote the set of all edge-pages cached at the beginning of the
block~$B$ and let $s_B = |S_B|$.  Now observe that each edge-page is requested
only in a contiguous segment of blocks, once in each block. It follows that the
total savings on edge-pages are equal to $\sum_B s_B$ where the sum is over all
blocks. In particular, the maximal possible savings on the edge-pages are
$(d-1)m$, using the fact that $S_I$ is empty.

We prove that there is a~service with the total savings of at least
$(d-1)m+K/(n+1)$ if and only if there is an independent set of size $K$ in $G$.
First the easy direction.

\begin{lemma}
\label{l:simple_easy_direction}
Suppose that $G$ has an independent set $W$ of size $K$. Then there exists a
service with the total savings of $(d-1)m+K/(n+1)$.
\end{lemma}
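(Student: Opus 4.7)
The plan is to take the $H=1$ specialization of the construction from Lemma~\ref{l:easy_direction} and observe that rescaling the vertex-page cost to $1/(n+1)$ merely rescales the corresponding contribution to the savings; all other arguments carry over verbatim.

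For each edge $e = \{u, v\}$, ordered so that the $u$-phase precedes the $v$-phase, I would define the service as follows: if $u \in W$, keep $\aa^e$, $b^e$, $\beta^e$, $\bb^e$ in the cache throughout the intervals between their first and last requests and leave $a^e, \alpha^e$ outside the cache entirely; otherwise keep $\aa^e$, $\alpha^e$, $a^e$, $\bb^e$ inside and leave $b^e, \beta^e$ outside. For each $v \in W$, also keep $p_v$ in the cache between its two requests. A direct reading of the request schedule (in particular Figure~\ref{fig:simple-edge-pages}) shows that in either branch exactly one of the six edge-pages of $e$ sits in the cache at the beginning of each of the $d-1$ non-initial blocks, giving edge-page savings of $(d-1)m$. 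Each cached vertex-page contributes $1/(n+1)$, so the total savings come out to $(d-1)m + K/(n+1)$, as required.

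The main obstacle is to check that the service is valid. The $m$ edge-slots suffice because at any moment at most one edge-page per edge is in the cache. Since the cache has total size $2m+1$, the only possible violation is that some $p_v$ with $v \in W$ fails to fit because during the $v$-phase all $m$ size-two slots are full while a size-three page is also cached. To rule this out I would argue, for each edge $e' = \{u', v'\}$ incident to $v$ (ordered so the $u'$-phase is first), as follows. If $v = u'$, then $v \in W$ puts $e'$ into the first branch, so $\alpha^{e'}$ is never cached, while the cached $\beta^{e'}$ lives entirely inside the $v'$-phase. If $v = v'$, then independence of $W$ forces $u' \notin W$, placing $e'$ in the second branch, so $\beta^{e'}$ is never cached while the cached $\alpha^{e'}$ lives entirely inside the $u'$-phase. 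In either case no size-three page associated with any edge incident to $v$ is in the cache during the $v$-phase, so $p_v$ fits alongside the at most $m$ size-two pages. Independence of $W$ is used precisely at this one step, and it is the only substantive point of the verification.
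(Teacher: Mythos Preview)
Your argument is correct and is essentially the paper's own proof: the same edge-page caching scheme (branching on whether the earlier endpoint lies in $W$) and the same reason the vertex-pages fit. Your validity check is in fact more detailed than the paper's one-line assertion that ``during the corresponding phase no page of size three is cached''; to make your version airtight just add the trivial remark that for edges \emph{not} incident to $v$ the size-three pages live entirely in other phases, so only the incident-edge cases you analyze can matter.
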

\begin{proof}
For any $e$, denote $e=uv$ so that $u$ precedes $v$ in the ordering of phases.
If $u\in W$, we keep $\aa^e, b^e, \bb^e$ and $\beta^e$ in the cache from the
first to the last request on each page, and we do not cache $a^e$ and
$\alpha^e$ at any time. Otherwise we cache $\bb^e, a^e, \aa^e$ and $\alpha^e$,
and do not cache $b^e$ and $\beta^e$ at any time. In both cases, at each time
at most one page associated with $e$ is in the cache and the savings on those
pages is $(d-1)m$. See Fig.~\ref{fig:simple-is-easy-direction} for an
illustration.

\begin{figure}
\begin{center}
\includegraphics[scale=1]{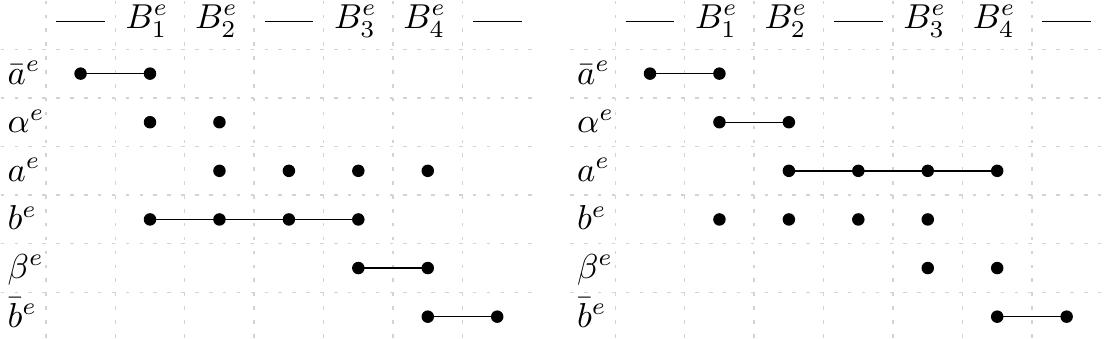}
\caption{The two ways of caching in Lemma~\ref{l:simple_easy_direction}}
\label{fig:simple-is-easy-direction}
\end{center}
\end{figure}

For any $v\in W$, we cache $p_v$ between its two requests.  To check that this
is a valid service, observe that if $v\in W$, then during the corresponding
phase no page of size three is cached. Thus the page $p_v$ always fits in the
cache together with at most $m$ pages of size two.
\qed
\end{proof}

Now we prove the converse in a sequence of claims. Fix a valid service with
savings at least $(d-1)m$. For a block~$B$, let $B'$ denote the following
block.

\begin{myclaim}
For any block $B$, with the exception of $B=I$, we have $s_B=m$.
\end{myclaim}
\begin{proof}
For each $B\neq I$ we have $s_B\leq m$. Because $s_I = 0$, the total savings on
edge-pages are $\sum_B s_B\leq (d-1)m$. We need an equality.
\qed
\end{proof}

We now prove that each edge occupies exactly one slot during the service.

\begin{myclaim}
For any block $B\neq I$ and for any $e$, $S_B$ contains exactly one page
associated with $e$.
\end{myclaim}
\begin{proof}
Let us use the notation $S_B^{\leq k} = S_B^{e_1} \cup \cdots \cup S_B^{e_k}$
and $s_B^{\leq k} = \Bigl|S_B^{\leq k}\Bigr|$. First, we shall prove for
each~$k \leq m$
\begin{equation}
\label{e:prfx}
s_B^{\leq k} = k.
\end{equation}

This is true for $B=F$, as only the $m$ edge-pages $\bb^e$ can be cached there,
and by the previous claim all of them are indeed cached. Similarly for $B=I'$
(i.e., immediately following the initial block).

If \eqref{e:prfx} is not true, then for some $k$ and $B\not\in\{I,F\}$ we have
$s_B^{\leq k}<s_{B'}^{\leq k}$. Then after processing the edge $e_k$ in the
block $B$ we have in the cache all the pages in $(S_B \setminus S_B^{\leq
k})\cup S_{B'}^{\leq k}$. Their number is $(m-s_B^{\leq k})+s_{B'}^{\leq k}>m$,
a contradiction.

The statement of the claim is an immediate consequence of~\eqref{e:prfx}.
\qed
\end{proof}

\begin{myclaim}
For any edge $e$, at least one of the pages $\alpha^e$ and $\beta^e$ is cached
between its two requests.
\end{myclaim}
\begin{proof}
Assume that none of the two pages is cached. It follows from the previous claim
that $b^e\in S_{B^e_2}$, as at this point $\alpha^e$ and $b^e$ are the only
pages associated with $e$ that can be cached. Similarly, $a^e\in S_{B^e_4}$.

It follows that there exists a block $B$ between $B^e_1$ and $B^e_4$ such that
$S_B$ contains the page $b^e$ and $S_{B'}$ contains the page $a^e$. However, in
$B$, the page $a^e$ is requested before the page~$b^e$. Thus at the point
between the two requests, the cache contains two pages associated with $e$,
plus one page associated with every other edge, the total of $m+1$ pages, a
contradiction.
\qed
\end{proof}

Now we are ready to complete this direction.

\begin{lemma}
\label{l:easy_hard_direction}
Suppose that there exists a valid service with the total savings of
$(d-1)m+K/(n+1)$. Then $G$ has an independent set $W$ of size $K$.
\end{lemma}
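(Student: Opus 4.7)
The plan is to take $W$ to be the set of vertices $v$ for which $p_v$ is cached between its two requests, and argue that $|W| = K$ and $W$ is independent. First I would observe that since each vertex-page has cost exactly $1/(n+1)$ and the edge-page savings are at most $(d-1)m$, the total savings of $(d-1)m + K/(n+1)$ force exactly $K$ vertex-pages to be cached between their two requests; this gives $|W| = K$.

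The heart of the argument is showing independence. For any edge $e = \{u,v\}$, the previous claim guarantees that at least one of $\alpha^e, \beta^e$ is cached between its two requests. Recall that $\alpha^e$ is requested only in $B^e_1$ and $B^e_2$, both in the $u$-phase, and $\beta^e$ is requested only in $B^e_3$ and $B^e_4$, both in the $v$-phase. So whichever of these size-3 pages is cached is cached entirely during the phase of one specific endpoint of $e$.

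Now I would combine this with the first claim, $s_B = m$ for every $B \neq I$. Since the cache has size $2m+1$ and every block outside $I$ has all $m$ edge-page slots occupied, at any moment during a phase when a size-3 edge-page is in the cache, the edge-pages alone occupy $2(m-1)+3 = 2m+1$ units, leaving no room for the vertex-page of size one. Therefore if $\alpha^e$ is cached between its two requests, then $p_u$ cannot be in the cache for the whole $u$-phase, so $u \notin W$; symmetrically if $\beta^e$ is cached, $v \notin W$. Either way, at least one endpoint of $e$ lies outside $W$, so $W$ is independent.

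The main obstacle, such as it is, is just the bookkeeping: making sure that ``cached between its two requests'' for $\alpha^e$ really does overlap the entire interval during which $p_u$ would need to be cached. Because $\alpha^e$'s two requests lie strictly inside the $u$-phase while $p_u$'s two requests bracket the whole $u$-phase, any moment at which $\alpha^e$ is cached is a moment at which $p_u$ would need to already be cached, and the capacity argument above forbids this. No non-trivial estimates are required; the lemma then follows, and together with Lemma~\ref{l:simple_easy_direction} completes the proof of Theorem~\ref{thm:simple_hardness}.
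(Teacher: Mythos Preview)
Your argument is correct and is essentially the paper's own proof, just phrased contrapositively: the paper assumes both endpoints $u,v\in W$ and derives that neither $\alpha^e$ nor $\beta^e$ can be cached (contradicting the preceding claim), whereas you start from that claim and conclude one endpoint is missing from $W$. The only minor looseness is the phrase ``at any moment'' where you invoke $s_B=m$; strictly speaking you need the capacity count only at the beginning of $B^e_2$ (resp.\ $B^e_4$), which is where $s_B=m$ is stated and where the uniqueness of size-$3$ pages in $S_B$ gives the $2(m-1)+3=2m+1$ computation.
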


\begin{proof}
Let $W$ be the set of all $v$ such that $p_v$ is cached between its two
requests. The total savings imply that $|W|=K$.

Now we claim that $W$ is independent. Suppose not, let $e=uv$ be an edge with
$u,v\in W$. Then $p_u$ and $p_v$ are cached in the corresponding phases. Thus
neither $\alpha^e$ nor $\beta^e$ can be cached, since together with other $m-1$
requests of size 2 associated with the remaining edges, the cache size needed
would be $2m+2$. However, this contradicts the last claim.
\qed
\end{proof}

Lemmata~\ref{l:simple_easy_direction} and \ref{l:easy_hard_direction} together
show that we constructed a~valid polynomial-time reduction from the problem of
independent set to general caching. Therefore,
Theorem~\ref{thm:simple_hardness} is proven.


\end{document}